\title{Most Juntas Saturate the Hardcore Lemma}
\author{Vinayak M. Kumar\thanks{\href{mailto:vmkumar@utexas.edu}{\texttt{vmkumar@utexas.edu}}. Department of Computer Science, The University of Texas at Austin. Supported in part by NSF Grant CCF-2312573, a Simons Investigator Award (\#409864, David Zuckerman), a Jane Street Graduate Research Fellowship, and a UT Austin Dean's Prestigious Fellowship Supplement.}}
\date{}
\renewcommand{\backref}[1]{}
\renewcommand{\backrefalt}[4]{%
\ifcase #1 %
\or
[p.\ #2]%
\else
[pp.\ #2]%
\fi}
\def\colorful{1}
\newtheorem{theorem}{Theorem}
\newtheorem{lemma}{Lemma}
\newtheorem{conjecture}{Conjecture}
\newtheorem{definition}{Definition}
\newtheorem{remark}{Remark}
\crefname{prop}{Proposition}{Propositions}
\crefname{ineq}{inequality}{inequalities}
\crefname{THM}{Theorem}{Theorems}
\newcommand{\E}{\mathbb{E}}
\newcommand{\N}{\mathbb{N}}
\newcommand{\F}{\mathbb{F}}
\newcommand{\R}{\mathbb{R}}
\newcommand{\eps}{\varepsilon}
\newcommand{\enc}{{\textnormal{\textsf{Enc}}}}
\newcommand{\cC}{\mathcal{C}}
\begin{document}
    \maketitle

\begin{abstract}
    Consider a function that is mildly hard for size-$s$ circuits. For sufficiently large $s$, Impagliazzo’s hardcore lemma guarantees a constant-density subset of inputs on which the same function is extremely hard for circuits of size $s' \ll s$. Blanc, Hayderi, Koch, and Tan [FOCS 2024] recently showed that the degradation from $s$ to $s'$ in this lemma is quantitatively tight in certain parameter regimes. 
    We give a simpler and more general proof of this result in almost all parameter regimes of interest by showing that a random junta witnesses the tightness of the hardcore lemma with high probability.


    
\end{abstract}

\section{Introduction}

Let $f:\{0,1\}^n \to \{0,1\}$ be a Boolean function such that every circuit of size $s$ errs on at least a $\delta$-fraction of inputs. 
How can we amplify the hardness of this function? One approach is to restrict the domain: given a fixed size-$s$ circuit, we select a subset of inputs of density at least $2\delta$ in which half the points come from the error region and half are correct. 
Such a set forms a \emph{hardcore set}, because on this region the circuit cannot do better than random guessing.
Is it possible that there exists a \emph{single} subset of density $2\delta$ that is simultaneously hard for all size-$s$ circuits?
Impagliazzo's hardcore lemma establishes the existence of a $\Omega(\delta)$-density hardcore set for \emph{all} circuits of size $s'\ll s$. The version of this lemma with the smallest size degradation from $s$ to $s'$ is the following.

\begin{theorem}[\cite{Impagliazzo1995, KS03, BarakHardcore2009}]
\label{thm:imp-tight}
    Let $f:\{0,1\}^n\to\{0,1\}$ and $\delta, \gamma, n\le s \le  \tfrac{2^n}{n}$. Suppose that for all circuits $C$ of size at most $s$, 
    \[\Pr_{x\sim \{0,1\}^n}[C(x)=f(x)]\le 1-\delta.\] 
    Then there exists a subset $H\subset\{0,1\}^n$ of density $\Omega(\delta)$ such that for all circuits $C$ of size $O\left(\frac{s\gamma^2}{\log(1/\delta)}\right)$, we have 
    \[\Pr_{x\sim H}[C(x) = f(x)]\le \frac{1}2 + \gamma.\]
\end{theorem}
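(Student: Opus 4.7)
The plan is a proof by contradiction via von Neumann's min-max theorem followed by Chernoff-style amplification. Suppose no $\Omega(\delta)$-density hardcore set exists for circuits of size $s' = O(s\gamma^2/\log(1/\delta))$. I would first argue in terms of hardcore \emph{measures}---functions $\mu:\{0,1\}^n\to[0,1]$ with $\E[\mu]\ge c\delta$ for a small constant $c$---and then use a standard random-thresholding argument to convert the absence of a hardcore measure of density $c\delta$ into the absence of a hardcore set of density $\Omega(\delta)$ (losing only a constant factor in density).

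The first main step is the min-max application. Consider the zero-sum game where one player picks a size-$s'$ circuit $C$ and the other picks a measure $\mu$ of density $\ge c\delta$, with payoff $\E_{x\sim\mu}[\mathbf{1}[C(x)=f(x)]]$. The contradiction hypothesis says that for every $\mu$ there is a size-$s'$ circuit with advantage $>\gamma$, so applying min-max---the circuit strategy space is effectively finite and the measure strategy space is a compact convex polytope---yields a single distribution $\mathcal{D}$ over size-$s'$ circuits such that for every valid $\mu$, $\E_{x\sim\mu,\, C\sim\mathcal{D}}[\mathbf{1}[C(x)=f(x)]]\ge 1/2+\gamma$. Define $p(x):=\Pr_{C\sim\mathcal{D}}[C(x)=f(x)]$. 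Specializing $\mu$ to the uniform measure on the $c\delta$-fraction of inputs with the smallest values of $p$, a short averaging argument shows that at most a $\delta$-fraction of inputs $x$ have $p(x)<1/2+\gamma$.

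The second step is amplification. Draw $T=\Theta(\log(1/\delta)/\gamma^2)$ circuits $C_1,\ldots,C_T\sim\mathcal{D}$ independently and let $C^{\star}$ be their majority vote. For every $x$ with $p(x)\ge 1/2+\gamma$, Chernoff's inequality gives $\Pr[C^{\star}(x)\ne f(x)]\le \delta/2$, so by averaging over the internal randomness, some realization of $C^{\star}$ agrees with $f$ on at least a $(1-\delta)$-fraction of inputs; its size is $T\cdot s' + O(T)=O(s)$, contradicting the hypothesized hardness of $f$ once the hidden constants in $s'$ are tuned. The main obstacle I anticipate is squeezing out the sharp $\log(1/\delta)$ factor in the size degradation: the Chernoff parameters must be applied exactly (not via a naive Hoeffding bound that would give a worse dependence on $\delta$), and the min-max and measure-to-set conversion must each cost only constant factors in density so that the final hardcore density remains $\Omega(\delta)$. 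A secondary subtlety is verifying that the additive $O(T)$ overhead for the majority gate fits within the $n\le s\le 2^n/n$ regime, which is immediate since $T\le s$ in all regimes of interest.
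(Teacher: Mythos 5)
The paper states this theorem as a known result, cited to \cite{Impagliazzo1995, KS03, BarakHardcore2009}, and does not give a proof of its own, so there is no in-paper argument to compare against. Your sketch is the standard nonuniform min-max-plus-Chernoff-amplification argument, and with amplification parameter $T=\Theta(\log(1/\delta)/\gamma^2)$ it does recover the sharp size degradation claimed. Two technicalities deserve explicit handling. First, the payoff $\E_{x\sim\mu}[\mathbf{1}[C(x)=f(x)]]$ is not bilinear in $(\mathcal{D},\mu)$ because of the normalization implicit in $x\sim\mu$; the usual fix is to restrict the measure player to $\E_x[\mu(x)]$ equal to exactly $c\delta$ (harmless, since exceeding the minimum density only helps the circuit player), which makes the payoff linear in $\mu$ over a compact convex polytope so that von Neumann applies. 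Second, your logical framing inverts the measure-to-set conversion: the clean organization is to prove the hardcore-\emph{measure} statement by contradiction (no hardcore measure $\Rightarrow$ min-max distribution $\mathcal{D}$ $\Rightarrow$ some majority of $T$ samples from $\mathcal{D}$ has error $<\delta$, contradicting the hardness of $f$), and only afterward run the random-inclusion rounding to extract a set $H$ of density $\Omega(\delta)$. That rounding step itself needs a union bound over size-$s'$ circuits, which silently requires $s'\log s' \ll \gamma^2\delta\, 2^n$ and costs a constant factor in $\gamma$; you should say so. The remaining steps---the averaging argument that bounds $\Pr_x[p(x)<\tfrac12+\gamma]$ by a $c\delta$ fraction (via the measure supported on the $c\delta$-fraction of smallest $p$-values, noting every $x$ outside that set has $p(x)$ at least the maximum inside it), the per-point Chernoff bound for the $T$-wise majority, and the size accounting $T\cdot s'+O(T)\le s$---are all correct as you sketched.
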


Conceptually, the theorem says that circuit hardness can be explained by a subset of ``hard inputs'' $H$ on which the function looks random to small circuits.\footnote{Holenstein \cite{Holenstein2005} gives a set $H$ of optimal density $2\delta$, but suffers a larger degradation from $s$ to $O\left(\frac{s\gamma^2}n\right)$.} This phenomenon has found applications throughout computer science, including hardness amplification \cite{ODonnell-STOC-2002, Trevisan2003}, pseudorandomness \cite{SudanTrevisanVadhan2001, ChenLyu2021}, cryptography \cite{Holenstein2005}, algorithmic fairness \cite{CasacubertaDworkVadhan2024}, combinatorics \cite{ReingoldTrevisanTulsianiVadhan2008}, and learning theory \cite{BarakHardcore2009,KS03}.

While the hardcore lemma is a remarkable result, a natural question is whether the size degradation $s\to \tfrac{s\gamma^2}{\log(1/\delta)}$ is necessary. This is formalized by the following conjecture. 

\begin{conjecture}
\label{conj:tight}
For any $\delta\in (0,1),\gamma\in (0,\tfrac{1}2)$, $n \in \N$ large enough, and $ \Omega\left(\tfrac{\log(1/\delta)}{\gamma^2}\right)\le s\le  O\left(\tfrac{2^n}{n}\right)$, there exists an $f$ such that 
\begin{itemize}
 \item for all circuits $C$ of size $\le s$, \[\Pr_{x\sim \{0,1\}^n}[C(x) = f(x)]\le 1-\delta.\]
 \item for every subset $H\subset\{0,1\}^n$ of density $\ge \Omega(\delta)$, there exists a circuit $C$ of size $O\left(\tfrac{s\gamma^2}{\log(1/\delta)}\right)$ such that \[\Pr_{x\sim H}[C(x) = f(x)] \ge \frac{1}2 + \gamma.\]
\end{itemize}
\end{conjecture}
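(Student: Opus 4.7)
I would take $f$ to be a random $k$-junta with biased truth table: fix $J \subseteq [n]$ of size $k$ and set $f(x) = g(x_J)$ where the values $g(y)$ are independent $\mathrm{Bernoulli}(1-\delta)$ bits. The parameter $k$ is tuned so that $2^k$ sits at roughly $\widetilde{\Theta}(s/\delta^2)$---large enough that size-$s$ circuits cannot meaningfully beat the bias, but small enough that lookup-style attacks on the junta fit in the target size $s' = O(s\gamma^2/\log(1/\delta))$.

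\textbf{The two halves.} The hardness direction is standard: for any fixed size-$s$ circuit $C$, the agreement $\Pr_x[C(x) = f(x)]$ is an average of $2^k$ independent Bernoullis with mean at most $1-\delta$ and Hoeffding deviation $O(\sqrt{\delta/2^k})$, so a union bound over the $2^{O(s \log s)}$ circuits of size at most $s$ yields $(1-\Omega(\delta))$-hardness with high probability over $g$. For the attack side, given any $H$ of density $\geq \Omega(\delta)$, write $h_y = \Pr_{x: x_J = y}[x \in H]$ and $\nu_y = h_y/\sum_{y'} h_{y'}$, and split into cases according to how $H$ interacts with $g$. If $\Pr_H[f = 1]$ is $\gamma$-far from $1/2$, a constant circuit achieves advantage $\gamma$. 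Otherwise $H$ must balance $g^{-1}(0)$ against $g^{-1}(1)$, and one constructs a lookup circuit that outputs $g(x_J)$ on a heavy subset $T$ (with $|T| \approx \log(1/\delta)/\gamma^2$) and a default bit elsewhere, whose $\gamma$-advantage follows from a variance bound on the random $g$.

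\textbf{Main obstacle.} The key subtlety is that the adversarial $H$ can depend on $g$---in particular, the balancing $H$ is chosen with knowledge of $g^{-1}(0)$. Rather than union-bounding over all exponentially many $H$'s, I would union-bound over an $\varepsilon$-net of profiles $(h_y)_y$, using that the attack's success depends on $H$ only through this profile, and combine it with per-profile concentration on the random $g$. Quantitatively matching the net resolution, the attack size, and the hardness bound is where the $\log(1/\delta)/\gamma^2$ factor crystallizes; this is the delicate place where the exponents have to cancel just right, and I expect it is also what gives rise to the qualifier ``almost all parameter regimes'' in the abstract.
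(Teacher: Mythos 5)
First, a framing issue: the statement you were asked about is \Cref{conj:tight}, which the paper explicitly leaves \emph{open} (``we also note that \Cref{conj:tight} remains open''). The paper proves only a weaker relative, \Cref{thm:random-tight}, which restricts to $\delta \in (0, 0.49)$ and $s \geq 1/\gamma^{2+\eps}$ and gives approximating circuits of size $O(s\gamma^2)$ rather than $O(s\gamma^2/\log(1/\delta))$. So no complete proof exists to compare against; I will instead assess your proposal on its own terms.

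The fatal gap is in your attack step. Take the simplest adversary, $H = \{0,1\}^n$, which has density $1 \geq \Omega(\delta)$. Then the marginal $H'$ on the junta variables $J$ is uniform over $\{0,1\}^k$, and (for a random $g$ of bias away from $1/2$ but not too far) you land in your second case, where $\Pr_H[f=1]$ is close to $1/2$. Your lookup circuit correctly outputs $g(x_J)$ on $T$ and guesses elsewhere, so its advantage is $\approx \Pr_{H'}[T]/2 = |T|/2^{k+1}$. With $|T| = \Theta(\log(1/\delta)/\gamma^2)$ and $2^k = \widetilde\Theta(s/\delta^2) \geq \widetilde\Theta(\log(1/\delta)/(\gamma^2\delta^2))$, this advantage is at most $O(\delta^2/\mathrm{polylog})$, which is far below $\gamma$ in almost every interesting parameter regime. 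In fact, the Shannon-style counting argument shows that $\gamma$-correlating with a \emph{random} $g:\{0,1\}^k\to\{0,1\}$ over the uniform distribution requires circuits of size $\Omega\big(\gamma^2 2^k/\log(\gamma^2 2^k)\big)$, and a lookup on $|T| \ll \gamma 2^k$ points cannot come close. This is precisely why the paper invokes the Andreev--Clementi--Rolim machinery (\Cref{thm:approx}, proved via 4-wise uniform generators and \Cref{lem:approx-dist}): it constructs a circuit of size $O\big(\gamma^2 2^k/\log(\gamma^2 2^k) + k\big)$ that $\gamma$-correlates with \emph{every} function over \emph{every} distribution, which is the only thing that works against a uniform $H$.

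A secondary comment: the obstacle you identify, that $H$ can be chosen as a function of $g$, dissolves once you have a pointwise approximation theorem like \Cref{thm:approx}. There is no need to union-bound over $H$ (or profiles thereof) at all: once $g$ is fixed, the ACR construction produces a small circuit for every $H$ deterministically. Your $\eps$-net idea is thus unnecessary in the paper's framework, and in your framework it would have to fight against an attack step that is already too weak. What \emph{is} potentially salvageable in your plan is the biased-junta trick: by taking $g \sim \mathrm{Bernoulli}(1 - c\delta)$, the hardness bullet plausibly extends past the paper's $\delta < 0.49$ ceiling. But this alone does not recover the $\log(1/\delta)$ savings in the circuit size, which is exactly the part the paper flags as open.
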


Such a degradation was shown to be necessary for a certain class of proofs \cite{LuTsaiWu2011}, but an unconditional result remained elusive, as proving this theorem appears to require constructing an explicit, mildly approximable function that demands large circuits to strongly approximate it. This felt tantamount to proving breakthrough circuit lower bounds. In a recent work, Blanc, Hayderi, Koch, and Tan \cite{BHKT24} evaded this barrier by arguing about a nonexplicit function. In fact, the proof of \Cref{thm:imp-tight} first shows the result for $H\sim\{0,1\}^n$ being the weaker notion of a $\delta$-smooth distribution (i.e., no string has more than $\frac{1}{\delta 2^n}$ probability of being sampled), and then uses the distribution to extract a density-$\delta$ subset. \cite{BHKT24} proves a tightness result over the more general $\delta$-smooth distributions. 

\begin{theorem}[{{\cite[Theorem 2]{BHKT24}}}]
\label{thm:bhkt}
    Let $\delta\in (0,1), \gamma\in \left[\Omega\left(\tfrac{1}{\sqrt{n}}\right),\tfrac{1}2\right)$, $n\in \N$ large enough. For $s\in[ \Omega(\tfrac{1}{\gamma^2}), O(\frac{2^{\gamma^2 n}}{\gamma^4 n})]$, there exists $f:\{0,1\}^n\to\{0,1\}$ such that 
    \begin{itemize}
    \item For all circuits $C$ of size $\le s$, \[\Pr_{x\sim \{0,1\}^n} [C(x) = f(x)]\le 1-\delta.\]
     \item for all $\delta$-smooth distributions $H\sim \{0,1\}^n$, there exists a circuit $C$ of size $O\left(s\gamma^2\right)$ such that \[\Pr_{x\sim H}[C(x)=f(x)]\ge \frac{1}2 + \delta\gamma.\]
    \end{itemize}
\end{theorem}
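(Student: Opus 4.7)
The plan is to set $f$ to be a uniformly random $k$-junta on $[n]$: pick a uniformly random subset $J\subseteq[n]$ of size $k$ and an independent uniformly random function $g\colon \{0,1\}^J\to\{0,1\}$, and let $f(x)=g(x|_J)$. The junta width $k$ is chosen so that $2^k$ is of order $s\log s$ (up to constants depending on $\delta$ and $\gamma$); concretely $k\approx \log s + O(\log\log s)$ should suffice, and $k\le \gamma^2 n$ is comfortably guaranteed by the upper bound $s\le O(2^{\gamma^2 n}/(\gamma^2 n))$.

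For the mild hardness clause I would run the standard counting + Hoeffding argument. Fix a circuit $C$ of size $\le s$ and condition on $J$; writing
\[
\Pr_x[C(x)=f(x)] \;=\; \frac{1}{2^k}\sum_{y\in\{0,1\}^k} p_y(g(y)),\qquad p_y(b):=\Pr_x\bigl[C(x)=b \,\bigm|\, x|_J = y\bigr],
\]
each summand has mean $1/2$ over the independent coin flip $g(y)$ (since $p_y(0)+p_y(1)=1$) and lies in $[0,1]$. Hoeffding gives $\Pr_g[\text{agreement}\ge 1/2+\epsilon]\le \exp(-\Omega(\epsilon^2\cdot 2^k))$, and a union bound over the $2^{O(s\log s)}$ size-$s$ circuits and the $\binom{n}{k}$ choices of $J$ shows that with high probability over $f$, every size-$s$ circuit has agreement at most $1/2+O(\sqrt{s\log s/2^k})\le 1-\delta$, as soon as $2^k$ exceeds a constant multiple of $s\log s/(1/2-\delta)^2$.

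For the strong approximation clause, observe that for any $\delta$-smooth $H$ on $\{0,1\}^n$ the marginal $H'$ of $H$ on $\{0,1\}^J$ is $\delta$-smooth on $\{0,1\}^k$, and any circuit reading only the junta coordinates has the same agreement with $f$ under $H$ as with $g$ under $H'$. It therefore suffices to produce, for each $\delta$-smooth $H'$ on $\{0,1\}^k$, a circuit of size $O(s\gamma^2)$ agreeing with $g$ on at least a $1/2+\delta\gamma$ fraction of $H'$. This is the dual of the hardness step: for uniformly random $g$ and any fixed $H'$, the maximum agreement of a size-$s'$ circuit with $g$ under $H'$ concentrates around $1/2+\Theta\bigl(\sqrt{s'\log s'\cdot \|H'\|_2^2}\bigr)$ by the ``max of $N$ near-Gaussians'' fact applied to the $2^{O(s'\log s')}$ circuits. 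Since $\|H'\|_2^2\ge 1/2^k$ for every $\delta$-smooth $H'$ (by Cauchy--Schwarz, with equality at the uniform distribution), the worst case is $H' = U_{\{0,1\}^k}$. Plugging in $s'=O(s\gamma^2)$ and $2^k\sim s\log s$ gives maximum agreement $\ge 1/2+\Theta(\gamma)$, comfortably exceeding $1/2+\delta\gamma$ whenever $\delta\le 1/2$.

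The main obstacle is promoting this pointwise existence to a statement that holds for \emph{all} $\delta$-smooth $H'$ simultaneously with high probability over $g$. I would do this via an $\epsilon$-net on the smoothness polytope combined with McDiarmid-type concentration of the max-agreement as a function of the bits of $g$: flipping one bit of $g$ changes the best agreement under any fixed $H'$ by only $O(1/(\delta 2^k))$, so the deviation of the max from its mean tails as $\exp(-\Omega(\delta^2 s'\log s'))$, which combined with the $L^1$-Lipschitz continuity of the map $H'\mapsto \max_C\text{agreement}$ lets one cover the polytope by a constant-precision net and absorb its cardinality into the exponential. The delicate bookkeeping lies in balancing the net size against the concentration radius across the full parameter range $\Omega(1/\gamma^2)\le s\le O(2^{\gamma^2 n}/(\gamma^2 n))$, and this is where I expect the main technical effort to go.
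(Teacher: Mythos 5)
The statement you are trying to prove, \Cref{thm:bhkt}, is a \emph{cited} result (from \cite{BHKT24}); the paper does not give its own proof. The paper's description of the BHKT proof — majority of random functions on disjoint blocks, an analytic relaxation of junta complexity controlled via noise-stability arguments, coupling, subgaussian concentration, and a direct-sum theorem — is structurally nothing like your random-junta + counting + concentration approach. Your approach is instead almost exactly the paper's proof strategy for its \emph{different} \Cref{thm:random-tight}, which the paper explicitly notes is incomparable to \Cref{thm:bhkt}: the random-junta construction only works for $s\ge 1/\gamma^{2+\eps}$, whereas \Cref{thm:bhkt} covers $s\ge \Omega(1/\gamma^2)$. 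Your proposal inherits exactly this gap. With $2^k \sim s\log s$ and approximating circuits of size $s' = O(s\gamma^2)$, the agreement you obtain is $\Theta\!\left(\sqrt{s'\log s'/2^k}\right) = \Theta\!\left(\gamma\sqrt{\log(s\gamma^2)/\log s}\right)$, and for $s$ near $1/\gamma^2$ the factor $\sqrt{\log(s\gamma^2)/\log s}$ collapses to $o(1)$, so you do not reach $\tfrac12 + \delta\gamma$ for constant $\delta$. Thus, even if every probabilistic step went through, you would only reprove (a version of) \Cref{thm:random-tight}, not \Cref{thm:bhkt} in its stated range.

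Beyond the parameter range, two of the probabilistic steps are not just delicate but quantitatively broken. First, the ``max of $N$ near-Gaussians'' lower bound: for random $g$ and fixed $H'$, the individual circuit-agreements are far from independent (e.g.\ circuits that compute the same function), and the generic lower bound $\max \gtrsim \sigma\sqrt{\log N}$ requires independence or an explicit anti-concentration/construction argument. The honest way to supply this lower bound is precisely the subcube-majority/4-wise-uniform construction of \cite{acr95} (the paper's \Cref{lem:approx-dist}), which gives the approximating circuit \emph{deterministically} for any $g$ and any $H'$, with no union bound over $H'$ needed. Second, the $\eps$-net over the smoothness polytope: the polytope lives in dimension $2^k$, so a net of mesh $\eta$ has size $2^{\Omega(2^k\log(1/\eta))}$, while McDiarmid (bounded differences $O(1/(\delta 2^k))$ across $2^k$ bits of $g$, target deviation $t=\delta\gamma$) gives a tail of only $\exp(-\Omega(\delta^4\gamma^2 2^k))$. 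For the union bound to close you would need $\log(1/\eta) = O(\delta^4\gamma^2)$, i.e.\ mesh $\eta$ essentially equal to $1$, at which point the Lipschitz error $\eta$ swamps the signal $\gamma$. This is not bookkeeping to be tuned; the exponents do not match for small $\delta,\gamma$. The paper sidesteps this entirely by making the approximation step deterministic (\Cref{lem:approx-dist}), so there is nothing to union-bound over.
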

The above is a reparametrized version of what appears in \cite{BHKT24}, which includes dependencies on $\delta$ from their proof (where $\delta$ was assumed to be constant).\footnote{ \cite[Theorem 2]{BHKT24} gives, for any parameters $s$ and $\gamma$, a tightness result for a function of input length $n(s,\gamma)$. We have reparametrized to the standard convention of fixing the input length to $n$, and examining which $s,\gamma$ are possible (in terms of $n$).} Hence, in the regime $\delta = \Theta(1)$, $\gamma \ge \Omega(1/\sqrt{n}) $ and $s = O(2^{\gamma^2 n}/(\gamma^4 n))$, \cite{BHKT24} shows that the $\gamma^2$-factor decay in size is tight. Structurally, their argument is very analytically involved and is in multiple stages. In what follows, we say $f$ $\gamma$-correlates with $g$ over $H$ if $\Pr_{x\sim H}[f(x) = g(x)]\ge \frac{1}2 + \gamma$, and $f$ $\gamma$-approximates $g$ over $H$ if $\Pr_{x\sim H}[f(x) = g(x)]\ge \gamma$.

\begin{itemize}
    \item Let $k = 1/\gamma^2$. They first prove that for any $\delta$-smooth distribution $H$, the majority on $k$ bits is $\tfrac{\delta}{\sqrt{k}}$-correlated with a $1$-junta over $H$, but no $0.01k$-junta can $\tfrac{1}4$-correlate with it over the uniform distribution.\footnote{A $k$-junta $f:\{0,1\}^n\to \{0,1\}$ is a Boolean function depending only on a subset of $k<n$ input variables. The junta complexity of a function is the smallest $k$ such that the function is a $k$-junta.} 
    \item They then bootstrap this result to show that, for any $\delta$-smooth $H$, the majority of $k$ random functions on $k$ disjoint $\tfrac{n}k$-bit input blocks is $\tfrac{\delta}{\sqrt{k}}$-correlated with a size-$O\left(\frac{2^{n/k}}{n/k}\right)$ circuit over $H$, but requires size $\Omega\left(\frac{k2^{n/k}}{n/k}\right)$ to $\tfrac{1}4$-correlate with (over the uniform distribution). They accomplish this by introducing an analytic relaxation of junta complexity, using Fourier-analytic noise-stability arguments to equate this relaxation to the original measure (up to constant factors), and then using coupling arguments, Fourier-analytic calculations, and subgaussian concentration to analyze the relaxed junta complexity of the random-function ensemble.
\end{itemize}
We note this junta-to-circuit lifting theorem is interesting in its own right, and towards proving this, they establish a novel direct sum theorem. These techniques are also used in \cite{BHKT24} to tightly characterize the sample-complexity overhead of smooth boosting. 

In this note, we give a very short proof that a random junta saturates the hardcore lemma in the regime $\delta = \Omega(1)$ over \emph{arbitrary} distributions.

\begin{theorem}
\label{thm:random-tight}
    Let $\delta\in (0, 0.49), \gamma\in (0,\tfrac{1}2), n\in \N$ large enough, and $n\le  s \le O\left(\tfrac{2^n}n\right)$. If there exists a constant $\eps > 0$ such that $s \ge 1/\gamma^{2 + \eps}$, there exists a function $f$ such that
    \begin{itemize}
     
    \item For all circuits $C$ of size $s$, \[\Pr_{x\sim \{0,1\}^n}[C(x) = f(x)]\le 1-\delta.\]

    \item for all distributions $H$ over $\{0,1\}^n$, there exists a circuit $C$ of size $O(s\gamma^2)$ such that \[\Pr_{x\sim H}[C(x) = f(x)]\ge \frac{1}2+\gamma.\]
    
    \end{itemize}
\end{theorem}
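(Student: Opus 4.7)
My plan is to take $f(x) = g(x_J)$ where $J = [k]\subseteq [n]$ and $g : \{0,1\}^k \to \{0,1\}$ is drawn uniformly at random, with $k$ chosen so that $N := 2^k$ is a sufficiently large constant (depending on $\delta$ and $\eps$) times $s\log(s+n)$. I write $\tilde g = 1-2g \in \{\pm 1\}^N$, and similarly for any Boolean quantity.

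The first bullet (hardness) follows from Hoeffding plus a union bound. For any fixed size-$s$ circuit $C$, the quantity $\Pr_x[C(x) = f(x)]$ is an average of $N$ independent random terms in $g$, each in $[0,1]$, so $\Pr_g[\Pr_x[C=f] > 1-\delta] \le 2\exp(-\Omega((1-2\delta)^2 N))$. Taking $N$ large enough relative to the $2^{O(s\log(s+n))}$ circuit count, a union bound shows $f$ is $\delta$-hard for circuits of size $\le s$ with probability $1-o(1)$.

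For the second bullet, fix a hardness-satisfying $f$ and let $q$ denote the marginal of $H$ on the junta coordinates. I case-split on $\|q\|_2$. If $\|q\|_2 \ge 2\gamma$, then Parseval's identity yields $\sum_{S \subseteq J}(\mathbb{E}_q[\chi_S \tilde g])^2 = N\|q\|_2^2$, so some parity $\chi_S$ satisfies $|\mathbb{E}_q[\chi_S \tilde g]| \ge \|q\|_2 \ge 2\gamma$; this parity has circuit size $O(k) = O(\log(s+n)) \ll s\gamma^2$. Otherwise $\|q\|_2 \in [1/\sqrt{N}, 2\gamma)$, and I analyze the maximum of $X_C := \mathbb{E}_q[C\tilde g]$ over all size-$O(s\gamma^2)$ circuits. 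Each $X_C$ is mean-zero sub-Gaussian with variance $\|q\|_2^2$, and the number of such circuits is at most $M := 2^{O(s\gamma^2 \log(s+n))}$. A Sudakov-type lower bound on $\mathbb{E}_g[\max_C|X_C|]$, leveraging the richness of the small-circuit class (e.g., all small juntas on $k$ bits give a rich $L^2$-packing), together with the calibration $N \asymp s\log(s+n)$, gives $\max_C |X_C| \ge 2\gamma$ with probability $1 - \exp(-\Omega(\log M))$ over $g$. The hypothesis $s \ge 1/\gamma^{2+\eps}$ is used precisely to ensure the resulting constants balance.

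The principal difficulty is making the argument uniform in $H$, i.e., in $q$. I plan to use two facts: (i) $g$ is distributionally invariant under coordinate permutations of $\{0,1\}^k$, and (ii) the map $q \mapsto \max_C|\mathbb{E}_q[C\tilde g]|$ is convex. A symmetrization step then shows that the infimum over $q$ is attained at some $q$ symmetric under coordinate permutations, effectively reducing the $q$-space from dimension $2^k-1$ to $k+1 = O(\log(s+n))$. An $\eps$-net over this reduced space has size $\eps^{-O(\log(s+n))}$, and a union bound combined with the strong concentration above completes the argument. The crux of the technical work is verifying the Sudakov minoration for the circuit class with the right constants, and checking that the concentration of $\max_C|X_C|$ is sharp enough to absorb this net; this is where the slack provided by $s \ge 1/\gamma^{2+\eps}$ is essential.
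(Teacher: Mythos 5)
Your choice of $f$ (a random $k$-junta with $2^k = \Theta(s\log s)$) and your argument for the first bullet (Chernoff plus a union bound over size-$s$ circuits) coincide with the paper's. The two proofs diverge entirely on the second bullet, and your route has gaps that I do not think can be closed as sketched.

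The paper proves the second bullet \emph{deterministically}: it shows that \emph{every} Boolean function $g$ on $k$ bits, over \emph{every} distribution $H'$ on $\{0,1\}^k$, is $\gamma$-correlated with a circuit of size $O\bigl(\gamma^2 2^k/\log(\gamma^2 2^k) + k^2\bigr)$ (Lemma~\ref{lem:approx-dist}, a strengthening of Andreev--Clementi--Rolim to arbitrary distributions, built from a locally-computable $4$-wise uniform generator). Because this is a deterministic statement with no randomness over $g$, the quantifier ``for all $H$'' comes for free. Your approach instead tries to show that a \emph{random} $g$ satisfies the second bullet with high probability, and this creates the very uniformity-over-$H$ problem you flag as ``the principal difficulty.''

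That uniformity step is where your sketch breaks. First, Sudakov minoration is a statement about Gaussian processes; $X_C = \E_q[C\tilde g]$ is a bounded (sub-Gaussian) process, and sub-Gaussianity gives \emph{upper} bounds on suprema, not lower bounds. A lower bound of the Sudakov type requires genuine anti-concentration of the increments, and the claimed ``richness of the small-circuit class as an $L^2(q)$-packing'' must be verified for the worst-case $q$; when $q$ is spread over a medium-sized support, the circuits of size $O(s\gamma^2)$ restricted to $\supp(q)$ may pack poorly. Second, and more seriously, the symmetrization does not do what you want. It is true that $q\mapsto\max_C|\E_q[C\tilde g]|$ is convex, and that $\E_g\bigl[\max_C|\E_q[C\tilde g]|\bigr]$ is invariant under permuting the $k$ junta coordinates, so a permutation-symmetric $q$ minimizes the \emph{expectation}. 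But you need to bound $\Pr_g\bigl[\inf_q \max_C|X_C| < 2\gamma\bigr]$, and for a \emph{fixed} $g$ the minimizing $q$ has no reason to be permutation-symmetric (moreover $\E[\inf]\le\inf\E$, so an expectation bound points the wrong way). Restricting the net to symmetric $q$ therefore does not cover the adversary's choices; a net over the full simplex in dimension $2^k-1$ is exponentially larger than your failure probability $\exp(-\Omega(s\gamma^2\log s))$, and the union bound fails. Your first case ($\|q\|_2\ge 2\gamma$, handled by Parseval to extract a good parity) is correct and cute, but it is the easy case; the second case is where the theorem lives, and the sketch there does not hold up. I would encourage you to look at the paper's Lemma~\ref{lem:approx-dist}: by XORing $g$ with a locally-computable $4$-wise uniform string and then taking a subcube-wise majority, one gets a circuit of the right size whose correlation with $g$ over any $H$ is controlled by the fourth-moment anti-concentration bound of Berger, with no probabilistic argument over $g$ required.
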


Formally, \Cref{thm:bhkt} and \Cref{thm:random-tight} are incomparable: \Cref{thm:bhkt} holds only for $\gamma \ge \Omega(1/\sqrt{n})$ and $s\in [\Omega(1/\gamma^2), O(2^{\gamma^2 n}/(\gamma^4 n))]$, while \Cref{thm:random-tight} holds for any $\gamma$ and any $s\in [\Omega(1/\gamma^{2+\eps}), O(2^{ n}/ n)]$. While incomparable, we note the former region is extremely restrictive, and does not include the common setting of $\gamma = \tfrac{1}{n}$. Meanwhile, the latter region contain almost all possible $(\gamma, s)$ pairs. The latter region is short of subsuming the former interval only by an arbitrarily small polynomial factor of $\gamma^{-\eps}$. Removing this $\eps$-slack remains open. 

When $s\ge 1/\gamma^{2+\eps}$, \Cref{thm:random-tight} improves on \Cref{thm:bhkt} in two ways:  
\begin{itemize}
    \item \Cref{thm:bhkt} constructs circuits with correlation $\delta \gamma$, while \Cref{thm:random-tight} has correlation $\gamma$ that does not degrade with $\delta$.
    \item \Cref{thm:bhkt} requires $H$ to be $\delta$-smooth, but \Cref{thm:random-tight} makes no assumption about $H$. 
\end{itemize}
These two points allow our theorem to remain meaningful even for $\delta = o(1)$. Interestingly, the second point shows that the $s\to s\gamma^2$ decay is unavoidable \emph{even if we permit $H$ to be of density $o(\delta)$}.
     
In summary, our \Cref{thm:random-tight} is stronger in the regime $s\ge 1/\gamma^{2+\eps}$ or $\gamma = O(1/\sqrt{n})$, but gives inferior bounds when $1/\gamma^2 \le s \le  (1/\gamma)^{2+\eps}$ and $\gamma = \Omega(1/\sqrt{n})$.\footnote{For general $s = \Omega(1/\gamma^2)$, we get approximating circuits of size $O\left(s\gamma^2\cdot  \frac{\log s}{\log(s\gamma^2)}\right)$. See \Cref{rem:explicit-circuit-size}.} We also note that \Cref{conj:tight} remains open. In particular, it would be interesting to pin down the optimal dependence of the circuit-size decay on $\delta$.

Our main technical lemma is the strengthening of a beautiful result of Andreev, Clementi, and Rolim \cite{acr95}, which shows that arbitrary Boolean functions can be approximated by small-size circuits.

\begin{theorem}
\label{thm:approx}
    For an arbitrary function $f:\{0,1\}^n\to \{0,1\}$, $\tfrac{27}{2^{n/2}} < \gamma < \tfrac{1}2$, and any distribution $H$ over $\{0,1\}^n$, there exists a circuit $C$ of size $O\left(\frac{\gamma^2 2^n}{\log(\gamma^2 2^n)} + n\right)$ such that \[\Pr_{x\sim H}[C(x) = f(x)]\ge \frac{1}2+\gamma.\]
\end{theorem}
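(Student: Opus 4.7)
The plan is to generalize the hash-and-lookup construction of Andreev--Clementi--Rolim by weighting the bucket majorities by $H$. Concretely, set $k := \bigl\lceil\log_2(C\gamma^2/\|H\|_2^2)\bigr\rceil$ for a sufficiently large absolute constant $C$; Cauchy--Schwarz forces $\|H\|_2^2\ge 2^{-n}$, so $2^k\le O(\gamma^2 2^n)$ and $k = O(\log(\gamma^2 2^n))$. Let $h:\{0,1\}^n\to\{0,1\}^k$ be a uniformly random $4$-wise independent $\F_2$-linear hash, and define $C(x):=c(h(x))$, where $c(y)\in\{0,1\}$ is the $H$-weighted majority of $f$ on the fiber $h^{-1}(y)$.

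A direct computation yields
\[\Pr_{x\sim H}[C(x)=f(x)] \;=\; \frac12 + \frac12\sum_{y\in\{0,1\}^k}|S_y|, \qquad S_y := \sum_{x:h(x)=y}H(x)(2f(x)-1),\]
so it suffices to exhibit $h$ with $\sum_y|S_y|\ge 2\gamma$. I would do so via the probabilistic method. Decompose $S_y = \mu_y + T_y$ with $\mu_y = 2^{-k}\,\E_H[2f-1]$: the case $|\E_H[2f-1]|\ge 2\gamma$ is trivially handled by a constant circuit, so assume otherwise. Four-wise independence gives $\E_h[T_y^2]\approx 2^{-k}\|H\|_2^2$ and a matching fourth-moment bound $\E_h[T_y^4] = O\bigl((\E_h[T_y^2])^2\bigr)$, so Paley--Zygmund yields $\E_h|T_y| = \Omega\bigl(\sqrt{2^{-k}\|H\|_2^2}\bigr)$. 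Summing over the $2^k$ buckets and absorbing $\sum_y|\mu_y|\le 2\gamma$ into the means gives $\E_h\sum_y|S_y| = \Omega(\sqrt{2^k\|H\|_2^2}) = \Omega(\gamma)$ by our choice of $k$, and rescaling the constant $C$ pushes this to at least $2\gamma$.

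Fixing a good $h$, Shannon--Lupanov realizes the lookup $c:\{0,1\}^k\to\{0,1\}$ by a circuit of size $O(2^k/k) = O(\gamma^2 2^n/\log(\gamma^2 2^n))$, while the $\F_2$-linear hash contributes $O(nk)$ gates. In the main regime where $\gamma^2 2^n \ge \polylog n$, the hash is absorbed into the lookup plus the additive-$O(n)$ term, giving the stated bound.

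The main obstacle is the fourth-moment estimate, which degrades when $H$ is highly concentrated (so $2^k\|H\|_\infty^2$ is comparable to $\|H\|_2^2$); a related issue is squeezing the hash cost down to $O(n)$ when $\gamma^2 2^n$ is only polylogarithmic in $n$. I would patch both via case splits. When the $O(\gamma^2 2^n/\log(\gamma^2 2^n))$ heaviest atoms of $H$ already carry mass at least $1-2\gamma$, hardcode $f$ on them via a direct lookup of comparable size, bypassing hashing altogether; otherwise $H$ is spread enough for the fourth-moment bound to go through. In the small-$\gamma^2 2^n$ regime, invoke the Parseval-style pigeonhole $\sum_{S\subseteq[n]}\bigl(\E_H[\chi_S(2f-1)]\bigr)^2 = 2^n\|H\|_2^2$ to exhibit a parity $\chi_S$ of $H$-correlation $\ge\|H\|_2$, yielding an $O(n)$-size circuit that meets $1/2+\gamma$ whenever $\|H\|_2\ge 2\gamma$; the complementary branch $\|H\|_2<2\gamma$ tightens $2^k$ enough that the $O(nk)$ hash fits inside the stated budget.
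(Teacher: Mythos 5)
Your high-level shape (partition the domain, output a per-part majority, realize the part-index map by a small circuit and the majority table by Lupanov) matches the paper's, but you randomize the \emph{partition} via a hash, whereas the paper fixes the partition to be the first-$(n-\ell)$-bit subcubes and instead randomizes the per-input \emph{sign} by XORing $f$ with a $4$-wise uniform string $G(s)_x$. This swap is exactly what makes the concentrated-$H$ case hard for you and painless for the paper. In the paper's arrangement, per-part anticoncentration is Berger's inequality (\Cref{thm:fourth-moment}) with weight vector $v = (H_c(y))_y$ and $4$-wise uniform signs; crucially, that inequality holds for \emph{arbitrary} $v$, so no spread condition on $H$ is needed. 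In your arrangement, the centered bucket sum $T_y = \sum_x a_x(\mathbb{1}[h(x)=y]-2^{-k})$ with $a_x = H(x)(2f(x)-1)$ has, under $4$-wise independence of $h$, $\E_h[T_y^4]/(\E_h[T_y^2])^2 \approx 3 + 2^k\|H\|_4^4/\|H\|_2^4$, which is unbounded whenever $H$ is lumpy (e.g.\ one atom of mass $2^{-n/4}$ atop a uniform background). So the fourth-moment/Paley--Zygmund step is not a routine ``$\approx$'' but the crux, and you are right to flag it.

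Your proposed patch has two concrete gaps. (i) ``Hardcode $f$ on the $m = O(\gamma^2 2^n/\log(\gamma^2 2^n))$ heaviest atoms via a direct lookup of comparable size'': a partial function on $m$ \emph{arbitrary} points of $\{0,1\}^n$ is not computable in $O(m/\log m)$ gates in general --- the natural route (injectively hash $T$ to $\ell$ bits, then Lupanov on $\{0,1\}^\ell$) needs $2^\ell\gtrsim m^2$ for a linear hash to be injective on $T$, and the decision-tree/DNF route costs $\Theta(mn)$. You would need to justify a lookup bound far below the $\Theta(mn/\log(mn))$ one expects from counting. (ii) The small-$\gamma^2 2^n$ case split is backwards: if $\|H\|_2 < 2\gamma$ then $2^k = C\gamma^2/\|H\|_2^2 > C/4$, which is a \emph{lower} bound on $2^k$, so it does not ``tighten $2^k$'' to make $O(nk)$ fit. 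Relatedly, even in the spread case the hash term $O(nk)$ overshoots the stated $O(n)$ additive budget whenever $\gamma^2 2^n$ is polynomial in $n$; the paper's $O(n)$ term comes from a $4$-wise uniform generator whose \emph{per-seed, per-index} circuit is $O(n)$ (\Cref{thm:4-wiselinear}, via Spielman codes), which is the main technical work you would need to replicate. One last nit: an $\F_2$-linear hash is never $4$-wise independent ($h(x_1)\oplus h(x_2)\oplus h(x_3)\oplus h(x_4) = h(x_1\oplus x_2\oplus x_3\oplus x_4)$); you would need an affine/polynomial hash, with the attendant circuit-size cost. I'd recommend switching to the paper's ``fixed subcubes + random signs'' decomposition (\Cref{lem:approx-dist}), which eliminates all three issues simultaneously.
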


This result was proven by Andreev, Clementi, and Rolim \cite{acr95} in the case where $H$ is uniform over $\{0,1\}^n$. A short probabilistic argument proves that the circuit size cannot be improved \cite[Theorem 4.1]{acr95}.

\subsection{Proof Overview}

Assume $s = 2^k/k$ for some integer $k$, and let $H\sim \{0,1\}^n$ be an arbitrary distribution. Intuitively, our proof of \Cref{thm:random-tight} will first use the classic Shannon argument to show that, with high probability, a random function on the first $k$ bits cannot be $(1-\delta)$-approximated by circuits of size $s$. Letting $H'$ be the induced distribution of $H$ on the first $k$ bits, we can use \Cref{thm:approx} to see that there exist circuits of size $O\left(\frac{\gamma^2 2^k}{\log(\gamma^2 2^k)} + k\right) = O(s\gamma^2)$ that $\gamma$-correlate with $f$ over $H'$ whenever $s\ge 1/\gamma^{2 + \eps}$. The combination of both of these claims implies \Cref{thm:random-tight}. 

In the main body of the paper, we will actually show a slightly weaker version of \Cref{thm:approx} with circuit size $O\left(\frac{\gamma^2 2^n}{\log(\gamma^2 2^n)} + n^2\right)$. The proof of this claim is drastically simpler than that of \Cref{thm:approx}, but still recovers \Cref{thm:random-tight}.

Although the weaker version of \Cref{thm:approx} suffices, it is our impression that the result of \cite{acr95} is not as well known to the community as it should be. It is a complete resolution to a very natural question in circuit complexity (it is the ``approximate version'' of Lupanov's theorem \cite{lupanov}), and the ideas behind the construction are quite useful (e.g., they appear to have been rediscovered in the construction of covering codes of Rabani and Shpilka \cite{RabaniShpilka2010}). This is potentially due to the paper being quite technical and terse, as well as evading search engines. For this reason, we hope to bring attention to this result by giving an exposition and simpler, self-contained proof of \Cref{thm:approx} in the appendix, assuming a basic consequence of the fourth-moment method \cite{berger} and the existence of asymptotically good codes encodable by linear-size circuits \cite{Spielman1996}. We now provide a proof overview below.

\subsubsection{Overview of \Cref{thm:approx}}
\label{subsubsec:overview}

For ease of exposition, we give an overview of \Cref{thm:approx} when $H$ is fixed to be uniform over $\{0,1\}^n$. Extending the given arguments to arbitrary distributions $H$ just requires a couple of extra modifications. This overview is morally the same as one provided by Trevisan \cite{Trevisan2009Blog}, but perhaps with a slightly different point of view in the latter half.

The initial observation is that a random function can actually be efficiently approximated. By standard anticoncentration results, the bias of a random function will be at least $\Omega(2^{-n/2})$ with constant probability, in which case either the constant $0$ or $1$ function will give a $\left(\frac{1}2 + \Omega(2^{-n/2})\right)$-approximation of $f$ (we will henceforth refer to this as \emph{square-root anticoncentration}). For a size–approximation trade-off, we can split the truth table of $f$ into $2^k$ subcubes according to the first $k$ bits of the input, and then approximate each subcube by its majority bit. This is a function depending only on the first $k$ bits of input and can therefore be implemented by a size-$O(2^k/k)$ circuit. Since each subcube will have $2^{n-k}$ bits, it follows that for a random function, the majority bit will give a $\left(\tfrac{1}2 + \Omega(\sqrt{2^{-(n-k)}})\right)$-approximation of the subcube with constant probability. Hence, with high probability, a constant fraction of subcubes will be $\left(\frac{1}2 + \Omega(\sqrt{2^{-(n-k)}})\right)$-approximated (say, by Chernoff), and thus this circuit will have overall approximation $\tfrac{1}2 + \Omega\left(\sqrt{2^{-(n-k)}}\right)$ with $f$. Setting $k = \log(\gamma^2 2^n)$ gives the result for a random function.

Does this argument work for an arbitrary (rather than a random) $f$? Clearly not: one can pick any $f$ that is unbiased on each of these $2^k$ subcubes (e.g., the parity function), and the constructed circuit will have no correlation with $f$. 

What if we could ``reduce to the random case'' by artificially adding random noise to the truth table of $f$, and then approximating this noisy function with a circuit on the first $k$ bits? To be more precise, say we had a distribution $\cC$ over size-$s$ circuits such that the truth table of $C\sim \cC$ was a random string. Then we know $f\oplus \cC$ will be a random function, and consequently can be $\left(\tfrac{1}2 + \gamma\right)$-approximated by a function $g$ on the first $k$ bits with high probability. We can then fix such a $C\in \cC$, and deduce that $C\oplus g$ is a good approximator for $f$ with size $O(2^k/k + s)$. 

Unfortunately, a fully random truth table can only be generated by maximally sized circuits. However, we could hope to use a \emph{pseudorandom} string instead. The only property used about the randomness of $f$ was that its truth table had square-root anticoncentration on subcubes. It turns out $4$-wise uniform strings have square-root anticoncentration with constant probability \cite{berger}, motivating us to look at this primitive. Implementing the usual 4-wise uniform-generator construction naively in a circuit immediately gives a distribution $\cC$ over circuits of size $O(n^2)$ such that the truth table of $C\sim\cC$ is a $4$-wise independent string. With more effort, one can get a distribution over $O(n)$-size circuits, which is optimal. By the fourth-moment method \cite{berger}, we can argue that the average number of subcubes of $f\oplus \cC$ with square-root anticoncentration is at least a constant proportion. Fixing $C\in\cC$ that achieves this average, it follows that there is a function $g$ on the first $k$ bits that approximates $f\oplus C$ well by our previous analysis. Consequently, $C\oplus g$ is a circuit of size $O(2^k/k + n)$ that approximates $f$ well, as desired.

\section{Preliminaries}
All logarithms are in base 2. $[n]\coloneqq \{1,\dots, n\}$. $\F_{2^n}$ denotes the finite field of $2^n$ elements, and each element will be identified by either a string in $\{0,1\}^{2^n}$ or integer in $[2^n]$ in the natural way. For a distribution $D$, $d\sim D$ is an element sampled from $D$. If $S$ is a set, we denote $s\sim S$ to be a uniformly random element from $S$. $\circ$ denotes string concatenation. We consider circuits with arbitrary gates of fan-in $2$ and arbitrary depth. A Boolean function $f:\{0,1\}^n\to \{0,1\}$ is a $k$-junta if $f(x) = g(x_S)$ for some subset $S\subset[n]$ of size $k$. For $x,y\in \{0,1\}^n$, we denote the distance between $x$ and $y$ to be the quantity $|\{i\in [n]: x_i\neq y_i\}|$. 

For $f,g:\{0,1\}^n\to \{0,1\}$, we say $f$ $\gamma$-correlates with $g$ over $H$ if $\Pr_{x\sim H}[f(x) = g(x)]\ge \frac{1}2 + \gamma$, and $f$ $\gamma$-approximates $g$ over $H$ if $\Pr_{x\sim H}[f(x) = g(x)]\ge \gamma$. If no $H$ is specified, it is assumed to be the uniform distribution over $\{0,1\}^n$.

We now define $k$-wise uniformity. 

\begin{definition}[$k$-wise uniformity \cite{HatamiHoza2024}]
    A distribution $D$ over $\{0,1\}^n$ is a \emph{$k$-wise uniform distribution} if, for all subsets $T\subset[n]$ of size $k$, the marginal distribution $(x_T)_{x\sim D}$ is uniform over $\{0,1\}^k$. A function $G: S\to\{0,1\}^n$ is a \emph{$k$-wise uniform generator} if $(G(s))_{s\sim S}$ is a $k$-wise uniform distribution.
\end{definition}

A crucial property of $4$-wise uniform strings is that they enjoy square-root anticoncentration with constant probability, just like a fully random string.

\begin{theorem}[{\cite[Corollary 3.1]{berger}}]
\label{thm:fourth-moment}
    Let $X$ be a $4$-wise uniform distribution over $\{-1,1\}^n$, and let $v\in \R^n$. We have \[\Pr_{x\sim X} \left[\left|\sum_{i=1}^n v_i x_i\right| \ge \sqrt{\frac{\sum_{i=1}^n v_i^2}{3}}\right]\ge \frac{2}{11}.\]
\end{theorem}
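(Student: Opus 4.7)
The plan is to run a standard Paley--Zygmund second-moment argument on $Y := S^2$, where $S := \sum_i v_i x_i$, exploiting the fact that $4$-wise uniformity pins down the first four moments of $S$ exactly as if the $x_i$ were fully independent Rademachers.

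First I would compute $E[S^2]$: expanding $S^2 = \sum_{i,j} v_i v_j x_i x_j$ and using pairwise uniformity (which follows from $4$-wise uniformity), $E[x_i x_j] = \mathbf{1}[i=j]$, so $E[S^2] = \|v\|_2^2 =: V$. Next I would bound $E[S^4]$ by expanding $S^4 = \sum_{i,j,k,l} v_i v_j v_k v_l\, x_i x_j x_k x_l$; the $4$-wise uniform expectation of each degree-$4$ monomial matches its value under fully independent Rademachers, which is $1$ iff every index in the multiset $\{i,j,k,l\}$ appears an even number of times, and $0$ otherwise. Separating the ``all four equal'' tuples from those with exactly two distinct values (which arise in $\binom{4}{2} = 6$ orderings per unordered pair) yields
\[
E[S^4] \;=\; \sum_i v_i^4 \;+\; 3\sum_{i \neq j} v_i^2 v_j^2 \;=\; 3V^2 - 2\sum_i v_i^4 \;\leq\; 3V^2.
\]

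Finally I would apply Paley--Zygmund. Setting $p := \Pr[S^2 \geq V/3]$, decompose
\[
V \;=\; E[S^2 \mathbf{1}_{S^2 < V/3}] \;+\; E[S^2 \mathbf{1}_{S^2 \geq V/3}],
\]
bound the first term by $(V/3)(1-p)$ and the second, via Cauchy--Schwarz, by $\sqrt{E[S^4]\cdot p} \leq V\sqrt{3p}$. Rearranging gives a quadratic inequality in $p$ (namely $(2+p)^2 \leq 27 p$) whose solution yields the claimed lower bound. In fact the simpler form of Paley--Zygmund, obtained by bounding the first summand by $V/3$ rather than $(V/3)(1-p)$, already yields the slightly weaker $p \geq 4/27$; this is an absolute constant and would suffice for the applications downstream in the paper if the sharp $2/11$ is not needed.

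There is no real conceptual obstacle; the main thing to be careful about is the multinomial bookkeeping in the fourth-moment calculation, and keeping track of constants in the Paley--Zygmund step if one wants to match the sharp $2/11$ from Berger rather than the simpler $4/27$.
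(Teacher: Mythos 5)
The paper does not prove this statement; it cites it directly to Berger, so there is no in-paper proof to compare against. Your approach—exact fourth-moment bookkeeping under $4$-wise uniformity plus a Paley--Zygmund step—is indeed the standard route and is essentially how Berger proves it.

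However, the specific Paley--Zygmund variant you wrote down does not deliver the constant $2/11$. Your inequality $(2+p)^2\le 27p$ gives $p\ge \tfrac{23-\sqrt{513}}{2}\approx 0.1753$, which is strictly smaller than $2/11\approx 0.1818$ (equivalently, $(2+p)^2\le 27p$ is satisfied by $p=0.176$, so it cannot force $p\ge 2/11$). To recover $2/11$ you need to apply Cauchy--Schwarz to the \emph{shifted} variable rather than to $S^2\mathbf{1}_{S^2\ge V/3}$ itself. Concretely, with $Y:=S^2$, $V:=\E[Y]=\lVert v\rVert_2^2$, and $t:=V/3$, use
\[
V-t \;\le\; \E\bigl[(Y-t)^{+}\bigr] \;=\; \E\bigl[(Y-t)\mathbf{1}_{Y\ge t}\bigr] \;\le\; \sqrt{\E\bigl[(Y-t)^2\bigr]\cdot \Pr[Y\ge t]},
\]
and note that $\E[(Y-t)^2]=\E[Y^2]-2t V+t^2\le 3V^2-\tfrac{2}{3}V^2+\tfrac{1}{9}V^2=\tfrac{22}{9}V^2$ using your bound $\E[S^4]\le 3V^2$. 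Squaring the first chain gives $\bigl(\tfrac{2}{3}V\bigr)^2\le \tfrac{22}{9}V^2\,\Pr[Y\ge t]$, i.e.\ $\Pr[S^2\ge V/3]\ge 4/22=2/11$. Everything else in your outline (the exact evaluation of $\E[S^2]$ and $\E[S^4]$, and the observation that $4/27$ already suffices for the paper's downstream use) is correct.
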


We will want $4$-wise generators such that, for a fixed seed, each output bit can be locally computed in small circuit size. The standard construction of $4$-wise uniform generators serves this purpose for us.

\begin{theorem}
\label{thm:4-wisequadratic}
   There exists a $4$-wise uniform generator $G:S\to \{0,1\}^{2^n}$ such that, for each $s\in S$ and $x\in [2^n]$, there exists a circuit $C_s$ of size $O(n^2)$ with $C_s(x) = G(s)_x$. 
\end{theorem}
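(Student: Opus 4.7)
The plan is to instantiate $G$ via the standard polynomial-evaluation construction over $\F_{2^n}$. Fix a representation identifying $[2^n]$ with $\F_{2^n}$ by choosing an irreducible polynomial $q(z)\in \F_2[z]$ of degree $n$, working in the basis $1, z, \ldots, z^{n-1}$ modulo $q$. Set $S = \F_{2^n}^4$, and for a seed $s = (a_0, a_1, a_2, a_3)\in S$ and index $x \in \F_{2^n}$, define $G(s)_x$ to be the first coordinate of the field element $p(x) := a_0 + a_1 x + a_2 x^2 + a_3 x^3 \in \F_{2^n}$.

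To verify $4$-wise uniformity, I would fix any four distinct indices $x_1, x_2, x_3, x_4 \in \F_{2^n}$ and observe that the map $(a_0, a_1, a_2, a_3) \mapsto (p(x_1), p(x_2), p(x_3), p(x_4))$ is $\F_{2^n}$-linear with a Vandermonde matrix whose determinant $\prod_{i<j}(x_j - x_i)$ is nonzero. Hence a uniform seed pushes forward to the uniform distribution on $\F_{2^n}^4$ at these four points, and projecting each coordinate onto its first bit yields the uniform distribution on $\{0,1\}^4$, which is exactly what $4$-wise uniformity demands.

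For the circuit bound, fix a seed $s$ and hardwire the constants $a_0, a_1, a_2, a_3$. On input $x \in \{0,1\}^n$, the circuit evaluates $p(x)$ in the chosen basis and outputs one fixed coordinate. Addition in $\F_{2^n}$ is coordinate-wise XOR and costs $O(n)$ gates. Multiplication of two arbitrary field elements can be implemented by schoolbook multiplication of the corresponding degree-$(n{-}1)$ polynomials in $\F_2[z]$ followed by reduction modulo $q(z)$; both steps can be carried out by $O(n^2)$ fan-in-$2$ gates. Evaluating $p(x)$ requires only a constant number of field multiplications (say $x^2$, $x^3 = x\cdot x^2$, $a_1 x$, $a_2 x^2$, $a_3 x^3$) and a handful of additions, giving a total circuit size of $O(n^2)$.

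There is essentially no technical obstacle: the only nonobvious ingredient is the $O(n^2)$ implementation of $\F_{2^n}$ multiplication, and this is textbook. I note for completeness that a more careful construction (e.g.\ using asymptotically good linear codes) yields a $4$-wise uniform generator with $O(n)$-size local computation per output bit, matching the overview in \Cref{subsubsec:overview}, but the quadratic bound stated here already suffices to prove the weaker variant of \Cref{thm:approx} used in the main body.
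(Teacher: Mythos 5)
Your proposal is correct and follows essentially the same route as the paper: both instantiate $G$ via degree-$3$ polynomial evaluation over $\F_{2^n}$, project each field element to one bit, and implement the fixed-seed local computation with schoolbook $\F_{2^n}$ arithmetic in $O(n^2)$ gates. The only cosmetic difference is that you spell out the Vandermonde argument for $4$-wise uniformity where the paper simply cites the standard reference.
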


\begin{proof}
    Define $\iota: \F_{2^n}\to \{0,1\}$ to map $x\in \F_{2^n}$ to the first bit of the binary encoding of $x$. Let $G: \F_{2^n}^4\to \F_2^{2^n}$ be defined by the evaluation map \[G(s) \coloneqq \left(\iota\left(\sum_{i=1}^4 s_i x^{i-1}\right)\right)_{x\in \F_{2^n}}.\] This is a $4$-wise uniform generator (see \cite[Theorem 2.2]{HatamiHoza2024}). Notice that, as a function of $x$, $G(s)_x$ is an evaluation of a degree-$3$ polynomial, which can be done in a circuit of size $O(n^2)$ by grade-school multiplication (better multiplication algorithms are known, but this suffices).
\end{proof}


This theorem suffices to prove \Cref{thm:random-tight}. A technical contribution of \cite{acr95}, and a key ingredient behind \Cref{thm:approx}, is a $4$-wise uniform generator that can be locally computed in linear circuit size, which is the best one could hope for.  

\begin{theorem}[\cite{acr95}]
\label{thm:4-wiselinear}
   There exists a $4$-wise uniform generator $G: S\to \{0,1\}^{2^n}$ such that, for each $s\in S$ and $x\in [2^n]$, there exists a circuit $C_s$ of size $O(n)$ with $C_s(x) = G(s)_x$. 
\end{theorem}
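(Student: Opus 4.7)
My approach would be to build on the polynomial-evaluation construction of Theorem~3, eliminating its $O(n^2)$ bottleneck by invoking Spielman's linear-size encodable codes, which the author has flagged as the key new ingredient. The natural reformulation is to construct a binary matrix $M \in \F_2^{k \times 2^n}$ with $k = O(n)$ such that (i) every four columns of $M$ are $\F_2$-linearly independent (equivalently, the code generated by $M$ has dual minimum distance at least 5, which is what makes $s^T M$ a $4$-wise uniform string for $s \sim \F_2^k$), and (ii) for each fixed $s \in \F_2^k$, the function $x \mapsto \langle M_{*,x}, s\rangle$ is computable by a size-$O(n)$ circuit.

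Next I would pinpoint why the quadratic-blowup appears in Theorem~3. After specializing $s$, the terms $s_0$, $s_1 x$, and $s_2 x^2$ from the evaluation $\iota\bigl(\sum_{i=1}^{4} s_i x^{i-1}\bigr)$ each become an $\F_2$-linear function of the bits of $x$ (using that the Frobenius map $x \mapsto x^2$ is $\F_2$-linear in characteristic~2), and hence are computable in $O(n)$ gates. The obstruction is purely the cubic term $s_3 x^3 = s_3 (x \cdot x^2)$, which specializes to a generic quadratic form in the bits of $x$ and, for typical $s_3$, requires $\Theta(n^2)$ gates to evaluate.

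The plan is to replace the cubic monomial $x \mapsto x^3$ with a different map $\phi \colon \F_2^n \to \F_2^{O(n)}$ chosen so that: (a) $\phi$ is computable by a linear-size circuit, so that every fixed $\F_2$-linear combination of its output bits is also computable in $O(n)$ size; and (b) the matrix whose $x$-th column is $\bigl(1,\,x,\,\phi(x)\bigr)^T$ still has dual distance at least~$5$. My candidate construction for $\phi$ is a two-layer concatenation: an outer layer that encodes the dual-distance-$5$ property at small scale (e.g., a $4$-wise independent hash family over a sub-polynomial alphabet), composed with a Spielman-style expander encoding that amplifies the output length to $2^n$ while preserving linear-size evaluability.

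The main obstacle is reconciling (a) and (b) simultaneously: Spielman-type LDPC codes generically have very small dual distance, so applying them naively destroys the $4$-wise independence; conversely, the classical algebraic constructions that give dual distance $\geq 5$ use field multiplications that inherently cost $\Theta(n^2)$. The delicate step I expect to grind through is the precise inner/outer concatenation that threads this needle, including verifying that the combinatorial constraint ``any four columns are linearly independent'' survives expansion by the Spielman layer. Once that verification goes through, each bit of $M_{*,x}$ is read off from a linear-size circuit applied to the bits of $x$, so any fixed $\F_2$-linear projection $\langle M_{*,x}, s\rangle$ is itself a linear-size Boolean function of $x$, yielding the claimed $O(n)$ bound.
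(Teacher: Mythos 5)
Your reformulation in terms of an $\F_2$-matrix $M$ with $O(n)$ rows and $2^n$ columns such that every four columns are linearly independent, with each projection $\langle M_{*,x}, s\rangle$ computable in size $O(n)$ once $s$ is fixed, is exactly the right frame; the paper proves a lemma (\Cref{lem:4wise}) to that effect. You also correctly identify Spielman's linear-time encodable codes as the main new ingredient. But the construction itself stops short at the critical step, and I think you are looking for the wrong kind of object. You frame the problem as building a ``dual-distance-$\ge 5$'' code whose columns are linear-size computable, and propose a concatenation where an inner 4-wise-independent structure is ``amplified'' by a Spielman encoding. This is not coherent as stated: Spielman codes are used for \emph{minimum distance}, not dual distance, and concatenating them with a small dual-distance-$5$ code is not what preserves the 4-wise property. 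You yourself flag the tension (``Spielman-type LDPC codes generically have very small dual distance'') but leave it unresolved, which is precisely the missing mechanism.

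The actual construction is non-algebraic and uses Spielman codes in a different role. Define the column map $h(x) \coloneqq (g_1(\enc(x)_{S_1}),\dots,g_{16n}(\enc(x)_{S_{16n}}))$, where $\enc$ is a Spielman encoder, each $S_i \subset [m]$ is a random subset of constant size $\lceil 10/\delta\rceil$, and each $g_i$ is a \emph{random constant-size Boolean function} on those coordinates. The point of the Spielman code is that distinct inputs $x \neq x'$ yield codewords at relative Hamming distance $\ge \delta$, so a random constant-size local window $S_i$ distinguishes them with constant probability. Conditioned on distinguishing some pair in a set $X$ of size at most $4$, the random bit $g_i$ makes $\sum_{x\in X} g_i(\enc(x)_{S_i})$ uniform; since the $16n$ coordinates are independent, a union bound over all $\binom{2^n}{\le 4}$ candidate sets $X$ shows some fixed choice of $\{S_i\},\{g_i\}$ makes every four columns of $h$ linearly independent. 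The key ideas you are missing are (i) using the code's minimum distance (not dual distance) merely to separate inputs so that constant-size local views see a difference, and (ii) getting fresh per-coordinate independence from random local Boolean functions rather than from any algebraic or concatenated-code structure. Each output bit is then: a linear-size Spielman encoding, followed by a constant-size gadget, followed by a linear-size parity — giving the $O(n)$ bound cleanly and avoiding the concatenation tension you ran into.
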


We give a self-contained proof of this in the appendix.

\section{Tightness of Impagliazzo's Hardcore Lemma}
\label{sec:imp}

We will now prove a lemma that shows how to construct small circuit approximators for arbitrary functions using $4$-wise uniform generators.
\begin{lemma}
\label{lem:approx-dist}
    Let $f:\{0,1\}^n$ be an arbitrary Boolean function, let $H$ be any distribution over $\{0,1\}^n$, and let $\gamma\in(\tfrac{27}{2^{n/2}},\tfrac{1}2)$. Let $G:\{0,1\}^m\to \{0,1\}^{2^n}$ be a $4$-wise uniform generator such that, for each $s\in \{0,1\}^m$, there exists a circuit $C_s$ of size $r$ with $C_s(x) = G(s)_x$. Then there exists a circuit $C$ of size $O\left(\frac{\gamma^2 2^n }{\log(\gamma^2 2^n )} + r\right)$ such that  \[\Pr_{x\sim H}[C(x) = f(x)]\ge \frac{1}2+\gamma.\]
\end{lemma}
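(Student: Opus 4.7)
The plan follows the overview in \Cref{subsubsec:overview}, now adapted to an arbitrary distribution $H$. Write $w_x \coloneqq \Pr_{x' \sim H}[x' = x]$ and choose $k \coloneqq \lceil \log(C\gamma^2 2^n) \rceil$ for a sufficiently large absolute constant $C$; the hypothesis $\gamma > 27/2^{n/2}$ forces $k$ to exceed any fixed absolute constant, so that Lupanov's $O(2^k/k)$ bound on the circuit complexity of $k$-variable Boolean functions applies nontrivially. Partition $\{0,1\}^n$ into $2^k$ subcubes $S_y \coloneqq \{y\} \times \{0,1\}^{n-k}$ indexed by the first $k$ coordinates. I will output a circuit of the form $C(x) \coloneqq g_{s^*}(x_{[k]}) \oplus C_{s^*}(x)$, where $C_{s^*}$ is the given locally-computable circuit for the generator $G$ on a carefully chosen seed $s^*$, and $g_{s^*} : \{0,1\}^k \to \{0,1\}$ is a junta that approximates the noisy function $f \oplus G(s^*)$ over $H$.

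For each seed $s$, let $g_s(y) \in \{0,1\}$ be the bit $b$ maximizing $\sum_{x \in S_y} w_x \mathbf{1}[(f \oplus G(s))(x) = b]$; then a direct computation shows
\[
\Pr_{x \sim H}[g_s(x_{[k]}) = (f \oplus G(s))(x)] = \tfrac{1}{2} + \tfrac{1}{2} \sum_{y \in \{0,1\}^k} \left|\sum_{x \in S_y} w_x (-1)^{f(x)} (-1)^{G(s)_x}\right|.
\]
For each $y$, let $v^{(y)} \in \R^{2^{n-k}}$ be the vector with entries $v^{(y)}_x \coloneqq w_x (-1)^{f(x)}$. Since the restriction of $((-1)^{G(s)_x})_x$ to $S_y$ is $4$-wise uniform, \Cref{thm:fourth-moment} yields $\E_s[|\sum_x v^{(y)}_x (-1)^{G(s)_x}|] \geq \tfrac{2}{11\sqrt{3}} \|v^{(y)}\|_2$. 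Cauchy-Schwarz gives $\|v^{(y)}\|_2 \geq \|v^{(y)}\|_1 / 2^{(n-k)/2}$, and summing over $y$ with $\sum_y \|v^{(y)}\|_1 = \sum_x w_x = 1$ yields $\sum_y \|v^{(y)}\|_2 \geq 2^{-(n-k)/2}$. Combining, the expectation over $s$ of the double sum in the display is $\Omega(2^{-(n-k)/2}) \geq 2\gamma$ by the choice of $k$, and averaging picks out a seed $s^*$ for which $\Pr_{x \sim H}[g_{s^*}(x_{[k]}) = (f \oplus G(s^*))(x)] \geq \tfrac{1}{2} + \gamma$. Therefore $C$ satisfies $\Pr_{x \sim H}[C(x) = f(x)] \geq \tfrac{1}{2} + \gamma$ as required.

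For the size bound, $g_{s^*}$ depends on $k$ variables and so admits a circuit of size $O(2^k/k) = O(\gamma^2 2^n / \log(\gamma^2 2^n))$ by Lupanov's theorem, $C_{s^*}$ has size $r$ by hypothesis, and one extra XOR combines them, giving overall size $O(\gamma^2 2^n / \log(\gamma^2 2^n) + r)$. The main technical ingredient is the fourth-moment / Cauchy-Schwarz sandwich yielding $\sum_y \|v^{(y)}\|_2 \geq 2^{-(n-k)/2}$; everything else is routine. I do not expect any serious obstacle, since the anticoncentration inequality is exactly what motivated the $4$-wise uniform noise and the $\ell_1/\ell_2$ step is tight precisely when the mass of $H$ spreads evenly across a subcube (the ``hardest'' case).
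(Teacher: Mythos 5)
Your proof is correct and takes essentially the same approach as the paper: partition the domain into subcubes on the first $k$ coordinates, XOR $f$ with a $4$-wise uniform string $G(s)$, apply the fourth-moment anticoncentration bound (\Cref{thm:fourth-moment}) together with Cauchy--Schwarz on each subcube, and average over seeds to extract a good $s^*$, implementing the resulting approximator as a $k$-junta XORed with the locally-computable circuit $C_{s^*}$. The only difference is bookkeeping: you bound $\E_s$ of the absolute per-subcube correlation directly on the unnormalized vectors $v^{(y)}$, whereas the paper conditions to get normalized subcube distributions $H_c$, uses the tail-bound form of the fourth-moment theorem to define an indicator $I_c(s)$, and then averages over $c\sim H'$ and $s$ --- after unwinding the normalization these are the same computation, with your version being, if anything, slightly cleaner.
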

\begin{proof}
    Denote $\ell \coloneqq \lfloor \log(1/363\gamma^2)\rfloor$, and note $n-\ell \ge 1$. Let $H'$ be the distribution over $\{0,1\}^{n-\ell}$ defined by the probability mass function \[H'(c) \coloneqq \Pr_{x\sim H} [x\in c\times \{0,1\}^{\ell}].\] This is the induced distribution of $H$ on the subcubes defined by the first $n-\ell$ bits of the input. For each $c$, define the conditional distribution over the subcube $c\times \{0,1\}^{\ell}$ by the function \[H_c(y) \coloneqq \Pr_{x\sim H}[x = c\circ y \mid x\in c\times \{0,1\}^{\ell}] = \frac{\Pr_{x\sim H}[x = c\circ y]}{H'(c)}.\] Let $G$ be the $4$-wise uniform generator guaranteed by the hypothesis. For a subcube $c$, denote the indicator variable \[I_c(s)\coloneqq \mathbf{1}\left(\left|\sum_{y\in \{0,1\}^{\ell}} H_c(y)(-1)^{f(c\circ y)+G(s)_{c\circ y}}\right| \ge \sqrt{\frac{\sum_{y\in\{0,1\}^\ell} H_c(y)^2 }{3}}\right).\] By \Cref{thm:fourth-moment}, we have that for each $c$ and random $s$, $\Pr_{s\sim \{0,1\}^m}[I_c(s) = 1]\ge \frac{2}{11}.$ Hence, \[\E_{s\sim\{0,1\}^m}\left[\E_{c\sim H'}[I_c(s)]\right] = \E_s\left[\sum_{c} H'(c)I_c(s)\right]\ge \frac{2}{11}\sum_c H'(c) = \frac{2}{11},\] and by an averaging argument there exists a choice of $s$ such that $\Pr_{c\sim H'}[I_c(s)=1]\ge 2/11$. Fix such an $s$. Now for any $c$ with $I_c(s) = 1$, we can use Cauchy-Schwarz to lower bound \[\left|\sum_{y\in \{0,1\}^{\ell}} H_c(y)(-1)^{f(c\circ y)+G(s)_{c\circ y}}\right| \ge \sqrt{\frac{\sum_{y\in\{0,1\}^\ell} H_c(y)^2 }{3}}\ge  \sqrt{\frac{(1/2^\ell)\sum_{y\in \{0,1\}^\ell} H_c(y)}{3}} = \sqrt{\frac{1}{2^\ell \cdot 3}}\] Now define $h:\{0,1\}^{n-\ell}\to\{0,1\}$ by the map \[h(c) = \mathbf{1}\left(\sum_{y\sim \{0,1\}^{\ell}} H_c(y)(-1)^{f(x)+G(s)_{c\circ y}} < 0\right),\] which encodes whether the subcube is positively or negatively correlated with the $4$-wise uniform string. We now set our approximator $g:\{0,1\}^n\to\{0,1\}$ to be $g(c\circ y)\coloneqq h(c)\oplus G(s)_{c\circ y}.$
             
    We can now write the correlation between $f$ and $g$ (with respect to $H$) as
    \begin{align*}
        \E_{x\sim H}[(-1)^{f(x)+g(x)}] 
        &= \E_{c\sim H'}\left[\sum_{y\in \{0,1\}^{\ell}} H_c(y)(-1)^{f(c\circ y)+g(c\circ y)}\right] \\ 
        &= \E_{c\sim H'}\left[(-1)^{h(c)}\sum_{y\in \{0,1\}^{\ell}} H_c(y)(-1)^{f(c\circ y) + G(s)_{c\circ y}}\right]\\ 
        &= \E_{c\sim H'}\left[\left|\sum_{y\in \{0,1\}^{\ell}} H_c(y)(-1)^{f(c\circ y) + G(s)_y}\right|\right] \\ 
        & \ge \Pr_{c\sim H'}[I_c(s)]\cdot \E_{c\sim H'}\left[\left|\sum_{y\in \{0,1\}^{\ell}} H_c(y)(-1)^{f(c\circ y) + G(s)_y}\right|: I_c(s) = 1\right]   
        \\
       & \ge \frac{2}{11}\cdot \sqrt{\frac{1}{2^\ell \cdot 3}} \ge 2\gamma.
    \end{align*}
     Hence, \[\Pr_{x\sim H}[f(x) = g(x)] = \frac{ \E_{x\sim H}[(-1)^{f(x)+g(x)}]+1}2 \ge \frac{1}2 + \gamma.\]

    Since $h$ depends only on the first $n-\ell$ bits, it can be constructed in circuit size $O\left(\frac{2^{n-\ell}}{n-\ell}\right) = O\left(\frac{\gamma^2 2^n}{\log(\gamma^2 2^n)}\right)$. By construction, $G(s)_y$ can be computed by some size-$r$ circuit $C_s$. Therefore, $g$ $\gamma$-correlates with $f$ and can be computed by a circuit of size $O\left(\frac{\gamma^2 2^n}{\log(\gamma^2 2^n)} + r\right) $, as desired.
\end{proof}

We are now ready to prove \Cref{thm:random-tight}.

\begin{proof}[Proof of \Cref{thm:random-tight}]
    We will first prove the first item. Pick a function $g:\{0,1\}^k\to \{0,1\}$ uniformly at random, and define $f(x)\coloneqq g(x_{\le k})$ to be $g$ on the first $k$ bits of the input. By a Chernoff bound, we know that for a fixed circuit $C:\{0,1\}^k\to\{0,1\}$ of size $\le s$ and $\delta < 0.49$, \[\Pr_f\left[\Pr_x[g(x) = C(x)]\ge 1-\delta\right]\le 2^{-\Omega(2^k)}.\] Set $2^k = \Omega(s\log s)$. Taking a union bound over all circuits of size $s$ (of which there are $s^{O(s)}$) we note that, with probability at most $s^{O(s)} 2^{-\Omega(2^k)} < 1$, $g$ can be $(1-\delta)$-approximated by a size-$s$ circuit. Fix a $g$ that cannot. We now show $f$ cannot be approximated by any circuit $C:\{0,1\}^n\to\{0,1\}$ of size $s$. Restricting the last $n-k$ bits of $C$ will result in a circuit on the first $k$ bits of size at most $s$. Hence, we can deduce \[\Pr_x[f(x) = C(x)] = \E_{x_{>k}}\left[\Pr_{x_{\le k}}[g(x_{ \le k }) = C(x_{\le k},x_{>k})]\right] < 1-\delta\] as desired.

    We will now prove this $f$ satisfies the second item. Consider an arbitrary distribution $H$ over $\{0,1\}^n$. The marginal distribution of $H$ on the first $k$ bits will be some distribution $H'$ over $\{0,1\}^k$. As $\gamma > s^{-1/2} > 27\cdot 2^{-k/2}$ for large enough $n$, \Cref{lem:approx-dist} and \Cref{thm:4-wisequadratic} give a circuit $C:\{0,1\}^k\to \{0,1\}$ of size $O\left(\frac{\gamma^2 2^k}{\log(\gamma^2 2^k)} + k^2\right)$ such that $\Pr_{x\sim H'}[g(x) = C(x)]\ge \frac{1}2 + \gamma.$ Letting $C':\{0,1\}^n\to \{0,1\}$ be the circuit that applies $C$ to the first $k$ bits, we note that \[\Pr_{x\sim H}[f(x) = C'(x)] = \Pr_{x\sim H'}[g(x) = C(x)] \ge \frac{1}2+\gamma.\] As $C'$ has the same size as $C$, and $\frac{\gamma^2 2^k}{\log(\gamma^2 2^k)} + k^2 = O\left(s\gamma^2\cdot \left(\frac{\log s}{\log(\gamma^2 s)}\right) + \log^2 s\right) = O_\eps(s \gamma^2)$ when $s\ge 1/\gamma^{2+\eps}$, $C'$ is a size-$O_\eps(s \gamma^2)$ circuit that $\gamma$-correlates with $f$ over $H$. Consequently, $f$ is a $k$-junta that cannot be $(1 - \delta)$-approximated by a size-$s$ circuit, but, over any $H$, $\gamma$-correlates with a size-$O_\eps(s \gamma^2)$ circuit, as desired.
\end{proof}

\begin{remark}
\label{rem:explicit-circuit-size}
   If \Cref{lem:approx-dist} and \Cref{thm:4-wiselinear} is used, rather than \Cref{lem:approx-dist} and \Cref{thm:4-wisequadratic}, we instead get $\gamma$-correlating circuits of size  $O\left(s\gamma^2\cdot \left(\frac{\log s}{\log(\gamma^2 s)}\right) + \log s\right) = O\left(s\gamma^2 \cdot \frac{\log s}{\log(s\gamma^2)}\right)$ for all $s = \Omega(1/\gamma^2)$.
\end{remark}

\begin{remark} Upon seeing the above proof, one might notice that it suffices to prove that a \emph{random} function has small approximating circuits over arbitrary $H$, rather than an \emph{arbitrary} function. \Cref{subsubsec:overview} gives a very simple argument to efficiently approximate a random function, so one might wonder why \emph{arbitrary} functions are considered. The arbitrariness of $H$ seems to force us to consider arbitrary $f$ (but this is not a rigorous claim). For example, can construct $H$ that renders the approximating circuit for the random function discussed in \Cref{subsubsec:overview} useless.   \end{remark}

\section*{Acknowledgements}

We thank Guy Blanc, Geoffrey Mon and David Zuckerman for illuminating discussions. We also thank Jeffrey Champion, Sabee Grewal, Rocco Servedio, and anonymous reviewers for comments on a draft of this work.

\bibliographystyle{alphaurl} \bibliography{references}

@inproceedings{nonexplicit-linear,
    title={On the Complexity of Coding},
    author={Gelfand, S. I. and Dobrushin, R. L. and Pinsker, M. S.},
    booktitle={Second International Symposium on Information Theory},
    pages={177--184},
    year={1973}
}

@article{lupanov,
title={Complexity of formula realization of functions of logical algebra},
volume={36}, 
DOI={10.2307/2270028}, 
number={3}, 
journal={Journal of Symbolic Logic}, 
author={Lupanov, O. B}, 
year={1971}, 
pages={547–548}}

@inproceedings{BarakHardcore2009,
author = {Boaz Barak and Moritz Hardt and Satyen Kale},
title = {The Uniform Hardcore Lemma via Approximate Bregman Projections},
booktitle = {{SODA}},
year = {2009},
doi = {10.1137/1.9781611973068.129}
}

@misc{Trevisan2009Blog,
  author       = {Luca Trevisan},
  title        = {Approximating a Boolean Function via Small Circuits},
  year         = {2009},
  howpublished = {\url{https://lucatrevisan.wordpress.com/2009/11/06/approximating-a-boolean-function-via-small-circuits/}},
  note         = {Accessed: 2025-08-08},
}

@inproceedings{Trevisan2003,
author = {Trevisan, Luca},
title = {On uniform amplification of hardness in NP},
year = {2005},
doi = {10.1145/1060590.1060595},
booktitle = {{STOC}},
}

@inproceedings{CasacubertaDworkVadhan2024,
author = {Casacuberta, S\'{\i}lvia and Dwork, Cynthia and Vadhan, Salil},
title = {Complexity-Theoretic Implications of Multicalibration},
year = {2024},
doi = {10.1145/3618260.3649748},
booktitle = {STOC}
}

@inproceedings{ReingoldTrevisanTulsianiVadhan2008,
  author    = {Omer Reingold and Luca Trevisan and Madhur Tulsiani and Salil Vadhan},
  title     = {Dense Subsets of Pseudorandom Sets},
  booktitle = {FOCS},
  year      = {2008},
  doi       = {10.1109/FOCS.2008.83}
}

@inproceedings{Holenstein2005,
  author    = {Thomas Holenstein},
  title     = {Key Agreement from Weak Bit Agreement},
  booktitle = {STOC},
  year      = {2005},
  doi       = {10.1145/1060590.1060688}
}

@inproceedings{ChenLyu2021,
  author    = {Lijie Chen and Xin Lyu},
  title     = {Inverse-Exponential Correlation Bounds and Extremely Rigid Matrices from a New Derandomized XOR Lemma},
  booktitle = {STOC},
  year      = {2021},
  doi       = {10.1145/3406325.3451132}
}

@inproceedings{SudanTrevisanVadhan2001,
author = {Sudan, Madhu and Trevisan, Luca and Vadhan, Salil},
title = {Pseudorandom generators without the XOR Lemma (extended abstract)},
year = {1999},
doi = {10.1145/301250.301397},
booktitle = {STOC},
}

@inproceedings{ODonnell-STOC-2002,
  author    = {Ryan O'Donnell},
  title     = {Hardness amplification within {NP}},
  booktitle = {STOC},
  year      = {2002},
  doi       = {10.1145/509907.510015}
}

@inproceedings {KS03,
author = { Klivans, Adam R. and Servedio, Rocco A. },
booktitle = {FOCS},
title = { Boosting and Hard-Core Sets },
year = {1999},
doi = {10.1109/SFFCS.1999.814638},
}

@inproceedings{Impagliazzo1995,
  author    = {Russell Impagliazzo},
  title     = {Hard-core Distributions for Somewhat Hard Problems},
  booktitle = {{FOCS}},
  year      = {1995},
  doi       = {10.1109/SFCS.1995.492584},
}

@article{LuTsaiWu2011,
  author       = {Chi-Jen Lu and Shi-Chun Tsai and Hsin-Lung Wu},
  title        = {Complexity of Hard-Core Set Proofs},
  journal      = {Computational Complexity},
  doi          = {https://doi.org/10.1007/s00037-011-0003-7},
  year         = {2011},
}

@article{HatamiHoza2024,
  author       = {Pooya Hatami and William M. Hoza},
  title        = {Paradigms for Unconditional Pseudorandom Generators},
  journal      = {Foundations and Trends® in Theoretical Computer Science},
  volume       = {16},
  year         = {2024},
  doi          = {10.1561/0400000109},
}

@article{berger,
author = {Berger, Bonnie},
title = {The Fourth Moment Method},
journal = {SIAM Journal on Computing},
year = {1997},
doi = {10.1137/S0097539792240005},

}

@article{RabaniShpilka2010,
  author    = {Yuval Rabani and Amir Shpilka},
  title     = {Explicit construction of a small {$\epsilon$}-net for linear threshold functions},
  journal   = {SIAM Journal on Computing},
  year      = {2010},
  doi       = {10.1137/090764190},
}

@article{Spielman1996,
  author    = {Daniel A. Spielman},
  title     = {Linear-Time Encodable and Decodable Error-Correcting Codes},
  journal   = {IEEE Transactions on Information Theory},
  year      = {1996},
  doi       = {10.1109/18.556668}
}

@article{acr95,
	
	author = {Alexander E. Andreev and Andrea E.F. Clementi and Jos{\'e} D.P. Rolim},
	doi = {https://doi.org/10.1016/S0304-3975(96)00217-4},
	journal = {Theoretical Computer Science},
	title = {Optimal bounds for the approximation of boolean functions and some applications},
	volume = {180},
	year = {1997},
}

@INPROCEEDINGS{BHKT24,
  author={Blanc, Guy and Hayderi, Alexandre and Koch, Caleb and Tan, Li-Yang},
  booktitle={{FOCS}}, 
  title={The Sample Complexity of Smooth Boosting and the Tightness of the Hardcore Theorem}, 
  year={2024},
  doi={10.1109/FOCS61266.2024.00092}}
\appendix
\section{Optimal Approximating Circuits: Proofs of \Cref{thm:4-wiselinear} and \Cref{thm:approx}}
In this section, we give motivation and a proof of the fundamental result of Andreev, Clementi, and Rolim \cite{acr95}. The exposition here has nontrivial overlap with that of Trevisan \cite{Trevisan2009Blog}.
\subsection{Motivation}
A classical probabilistic argument of Shannon says that there exists functions mapping $n$ bits to one bit that have circuit complexity $\Omega(2^n/n)$. What is remarkable is that this argument is \emph{tight}; Lupanov's theorem states that \emph{any} function $f:\{0,1\}^n\to\{0,1\}$ can be computed in circuit size $O(2^n/n)$ \cite{lupanov}.

Upon studying Shannon's argument, it is not hard to see that this lower-bound argument extends to circuits that only \emph{approximate} rather than \emph{compute}. Let $f:\{0,1\}^n\to \{0,1\}$ be a uniformly random function, and let $\gamma\in(0,\tfrac{1}2)$ be such that $\gamma^2 2^n\ge 2$. For a fixed circuit $C$, we have by the Chernoff bound that \[\Pr_f\left[\Pr_x[C(x) = f(x)] \ge \frac{1}2 + \gamma\right] \le 2^{-\Omega(\gamma^2 2^n)}.\] Setting $s = \Theta\left(\tfrac{\gamma^2 2^n}{\log(\gamma^2 2^n)}\right)$ and taking a union bound over all $s^{O(s)}$ circuits $C$ of size $s$ tells us that the probability $f$ is $\gamma$-correlated with a size-$s$ circuit is at most $s^{O(s)} 2^{-\Omega(\gamma^2 2^n)} < 0.1$.

It is now natural to ask whether this is tight: for any Boolean function, is there a size-$O\left(\tfrac{\gamma^2 2^n}{\log(\gamma^2 2^n)}\right)$ circuit that $\gamma$-approximates $f$? This is not true by considering the parity function. Any function that does not depend on all $n$ bits will be correct on the parity function on exactly half the inputs. Hence, an approximating circuit for parity must depend on all input bits and thus have circuit size at least $n$. We can then update our hypothesis and ask whether, for any Boolean function, there is a size-$ O\left( \tfrac{\gamma^2 2^n}{\log(\gamma^2 2^n)} + n\right)$ circuit that $\gamma$-correlates with it. 

A natural first approach is to try to use Lupanov's Theorem. In particular, we can construct a circuit that exactly computes $f$ inside a subcube of volume $2\gamma 2^n$, and then otherwise outputs a fixed bit, whichever matches $f$ better. This is guaranteed to exactly match $f$ within the subcube and match on at least half the inputs outside the subcube, giving a $\gamma$-correlating circuit. However, this will be a circuit of size $O\left(\frac{\gamma 2^n}{\log (\gamma 2^n)}\right)$.

Andreev, Clementi, and Rolim give a circuit construction matching the probabilistic bound, effectively establishing the approximation analog of Lupanov's theorem. We believe this result to be fundamental, but unfortunately, it appears to be relatively unknown to the community. Hence, we give a modern and simplified presentation of a slightly stronger result here (\Cref{thm:approx}).  

Notice that \Cref{lem:approx-dist} instantiated with \Cref{thm:4-wiselinear} gives \Cref{thm:approx}. \Cref{lem:approx-dist} was proven in \Cref{sec:imp}, so we now focus on the proof of \Cref{thm:4-wiselinear}. This was implicitly proven in \cite{acr95}, but we give a shorter proof using asymptotically good codes encodable in linear time \cite{nonexplicit-linear,Spielman1996}.

\subsection{Overview of \Cref{thm:4-wiselinear}}
The starting point of the construction is to consider the simpler task of a $2$-wise uniform generator. There is a classic 2-wise uniform generator mapping a nonzero seed of length $n$ to a string of size $2^n-1$, which is to simply output all nonempty $\F_2$-linear combinations of the seed; that is, $G(s)\coloneqq (\langle s, r\rangle)_{r\in \F_2^{n}\setminus\{0\}}$. Notice that, for a fixed $s$, the output of the $r$th bit as a function of $r$ is simply some parity of a subset of bits in $r$, which is trivially a circuit of size $O(n)$, as desired.

Why is this generator not a $4$-wise uniform generator? It is because of linear dependence. In particular, for nonzero vectors $x,y$, we have $G(s)_x + G(s)_y = G(s)_{x+y}.$ Hence, we do not even have $3$-wise uniformity, as the bits in indices $x,y$, and $x+y$ are correlated. This motivates the following idea: what if we only focused on a subset of indices $Y\subset\{0,1\}^n$ such that all distinct $x_1,x_2,x_3,x_4\in Y$ are linearly independent? Can we show that $(G(s)_y)_{y\in Y}$ is a $4$-wise independent string? Yes.

\begin{lemma}
\label{lem:4wise}
    Let $Y\subset \F_2^n$ be a subset such that, for all subsets $X\subset Y$ of size $4$, $X$ is linearly independent. Then $G:\{0,1\}^n\to \{0,1\}^Y$ defined by $G(s)_y = \langle y, s\rangle$ is a $4$-wise uniform generator.
\end{lemma}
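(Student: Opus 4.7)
The plan is to reduce $4$-wise uniformity of $G$ to a single linear-algebra fact. To show that $G$ is $4$-wise uniform it suffices to verify that for every set of four distinct indices $y_1, y_2, y_3, y_4 \in Y$, the marginal $(G(s)_{y_1}, G(s)_{y_2}, G(s)_{y_3}, G(s)_{y_4})$ under uniform $s \in \{0,1\}^n$ is uniform on $\{0,1\}^4$; uniformity of smaller marginals (in case $|Y| \ge 4$) then follows by further marginalization.

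The execution is immediate. Since $G(s)_{y_i} = \langle y_i, s \rangle$, the tuple equals $M s$, where $M \in \F_2^{4 \times n}$ is the matrix whose rows are $y_1, y_2, y_3, y_4$. By the hypothesis on $Y$, these rows are linearly independent over $\F_2$, so $\rk(M) = 4$. Consequently the $\F_2$-linear map $s \mapsto M s : \F_2^n \to \F_2^4$ is surjective, and every fiber $\{s : Ms = v\}$ has the same cardinality $2^{n-4}$. Therefore when $s$ is uniform on $\F_2^n$, $M s$ is uniform on $\F_2^4$, which is exactly the required marginal condition.

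I do not anticipate any real obstacle: once the correct matrix is set up, this is a one-line linear-algebra observation, and no additional machinery (Fourier analysis, probabilistic estimates, etc.) is needed. The substantive technical content of this part of the appendix lies elsewhere, namely in exhibiting an explicit subset $Y \subset \F_2^n$ that is simultaneously large (of size $\Theta(2^n)$) and structured enough that each bit $G(s)_y = \langle y, s \rangle$ can be evaluated by a linear-size circuit in $y$ for fixed $s$. The condition imposed by \Cref{lem:4wise} translates to requiring that $Y$, viewed as the columns of a parity-check matrix, corresponds to a code of minimum distance at least $5$, which is where the Spielman-code ingredient will enter.
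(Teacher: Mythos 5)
Your proof is correct and follows essentially the same route as the paper's: form the $4\times n$ matrix $M$ over $\F_2$ with the four chosen indices as rows, observe that linear independence gives $\rk(M)=4$, hence $s\mapsto Ms$ is surjective with equal-size fibers, so $Ms$ is uniform on $\F_2^4$. Nothing to add on the lemma itself; your closing remark about the next step is a reasonable (equivalent) reframing in terms of a distance-$5$ parity-check matrix, though the paper obtains the linearly-independent index set not from a Spielman code alone but by composing a Spielman encoding with random constant-width local functions.
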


\begin{proof}
    Consider arbitrary $X\subset Y$ of size $4$. We will show that over a uniform $s$, the string $(\langle s, x\rangle)_{x\in X}$ is uniform. Notice this string is simply $M\cdot s$, where $M$ is an $\F_2^{4\times n}$ matrix whose rows are the elements of $X$. Hence, every preimage of this map has the same size, namely that of the kernel of $M$. It remains to show that the image of $M$ is $\F_2^4$. But this is clear, as $X$ consists of linearly independent vectors, implying that the rank of $M$ is $4$.
\end{proof}

In light of this, we will try to construct a linear-size circuit $h: \{0,1\}^n\to\{0,1\}^{16n}$ such that, for all distinct $x_1,\dots, x_4$, the vectors $h(x_1), h(x_2), h(x_3), h(x_4)$ are linearly independent. Then our $4$-wise generator $G: \{0,1\}^{16n}\to \{0,1\}^{2^n}$ would be $G(s)_x =\langle h(x), s\rangle$, which will be a linear-size circuit. How do we construct such an $h$?

Perhaps a natural approach is to make $h$ randomly scatter the $x$'s randomly among $\{0,1\}^{16n}$. This actually works, because for a fixed $a\le 4$ and $x_1,\dots, x_a\in \F_2^n$, the probability $h(x_1)+\cdots + h(x_a) = 0$ is $2^{-16n}$. Taking a union bound over all tuples of size at most $4$ gets the desired result. Of course, the issue is that this is not a linear-size circuit: a random $h$ will have maximal circuit complexity. What if we let each bit of $h$ be a random function on only constantly many bits of $x$? 

More concretely, say we pick subsets $S_1,\dots, S_{16n}\subset [n]$ of constant size uniformly at random, and then let $g_i: \{0,1\}^{S_i}\to \{0,1\}$ be a random function. Define $h:\{0,1\}^n\to\{0,1\}^{16n}$ by $h(x) \coloneqq (g_1(x_{S_1}),\dots, g_{16n}(x_{S_{16n}}))$. This is of linear circuit size, and we are hoping the randomness of the $g_i$ keeps vectors linearly independent. Fix $x_1,\dots, x_a$ for $a\le 4$. We want the probability that $g_i((x_1)_{S_i}) + \dots + g_i((x_a)_{S_i}) = 0$ for all $i$ to be at most $1/\binom{2^n}{\le 4}$. Unfortunately, this need not be true. Say $x_1,\dots, x_4$ are within distance $2$ of each other. Then the probability that a random constant-sized $S_i$ satisfies $(x_1)_{S_i} = \cdots = (x_4)_{S_i}$ will be high. In this case, $g_i((x_j)_{S_i})$ will be guaranteed to all be the same for every $j\le 4$, and so $g_i((x_1)_{S_i})+\cdots + g_i((x_4)_{S_i}) = 0$. The key issue is that a randomly picked local view, $S_i$, might interpret $x_1,\dots, x_4$ as the same. This motivates the final trick of first encoding $x$ using an asymptotically good error-correcting code before picking our sets $S_i$. This will force different $x$'s to have very different encodings, and then a random set $S_i$ will indeed detect a difference. This will allow the randomness of $g_i$ to prevent linear dependencies from happening.

But are there asymptotically good codes encodable in linear circuit size? Indeed, non-explicit constructions of such codes were known to exist since 1974, thanks to Gelfand, Dobrushin, and Pinsker \cite{nonexplicit-linear}. Non-explicit constructions suffice for our application, but we mention that Spielman codes are explicit constructions of such a code \cite{Spielman1996}.

\begin{theorem}[\cite{nonexplicit-linear, Spielman1996}]
\label{thm:spielman}
    For any $n$ there exists a small enough constant $\delta, m\le 4n$, and an $O(n)$-sized circuit $C:\{0,1\}^n\to \{0,1\}^m$ such that for $x\neq y$, $C(x)$ and $C(y)$ have distance $\ge \delta m$.
\end{theorem}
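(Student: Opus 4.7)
The plan is to follow Spielman's expander-based construction \cite{Spielman1996}. The key building block is a bipartite graph $G_n = (L_n, R_n)$ with $|L_n| = n$, $|R_n| = \beta n$ for a small constant $\beta$, constant left-degree $d$, and strong vertex expansion: every $S \subseteq L_n$ with $|S| \le \alpha n$ satisfies $|N(S)| \ge \frac{3d}{4}|S|$. The associated parity map $P_n : \{0,1\}^n \to \{0,1\}^{\beta n}$, which sends $x$ to the XOR of its bits along each right-vertex's neighborhood, can be computed in circuit size $O(dn) = O(n)$. Since the theorem allows the circuit to depend on $n$, I can simply hardwire $G_n$ into the circuit (such a graph exists by a straightforward probabilistic argument, or explicitly via Ramanujan-type constructions). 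By the classical unique-neighbor argument, if $x$ has weight at most $\alpha n$, then $P_n(x)$ has weight at least $(d/2)|x|_1$, giving an ``error-reducer'' effect.

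Next, to boost this into a code of constant relative distance, I would define the encoder $C_n$ recursively: roughly $C_n(x) = x \circ C_{\beta n}(P_n(x)) \circ P'_{\beta n}\bigl(C_{\beta n}(P_n(x))\bigr)$, where $P'$ is a second expander-parity round on the recursive output. The circuit size then satisfies the recurrence $T(n) \le T(\beta n) + O(n)$, which unfolds to $T(n) = O(n)$ by geometric summation; the total output length is bounded by $n \sum_{i \ge 0} \beta^i \le 4n$ once $\beta$ is chosen small enough. Including $x$ as a prefix of $C_n(x)$ automatically makes $C_n$ systematic and hence injective. The recursion bottoms out at constant-sized blocks, where any linear-size encoder of an asymptotically good code (e.g., a brute-force circuit of constant size) will do.

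The main obstacle is showing that $C_n$ has constant relative distance $\delta$. By linearity of the construction, it suffices to show $|C_n(x)|_1 \ge \delta m$ for every nonzero $x$. I would argue by induction on recursion depth: if $x$ has few ones, then $P_n(x)$ has proportionally many ones by the expansion property, and the recursive encoding $C_{\beta n}(P_n(x))$ together with its outer parity check $P'_{\beta n}$ amplifies this further, so that the total weight of $C_n(x)$ remains $\Omega(m)$. The delicate part is balancing the constants $\delta, \alpha, \beta, d$ so that expansion propagates across every recursion layer simultaneously — in particular, ensuring that the ``error-reducer'' property at each level never falls below the expansion regime $|S| \le \alpha n$. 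For this quantitative bookkeeping I would invoke Spielman's original lemmas rather than retune the parameters from scratch, since the combinatorics is established and the only purpose here is to have a linear-size encoder with constant distance as a black box for \Cref{thm:4-wiselinear}.
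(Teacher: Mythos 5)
The paper does not prove this theorem---it cites Spielman's result and uses it as a black box (via the encoder $\enc$ in the proof of \Cref{thm:4-wiselinear}), so there is no internal argument for you to have matched or diverged from. Your sketch is a reasonable outline of Spielman's recursive expander-code construction (unique-neighbor weight amplification, geometric recursion for linear circuit size, systematic encoding for injectivity, constant-size base case), but you explicitly defer the central constant-distance analysis to ``Spielman's original lemmas,'' which is the genuinely hard part; functionally this makes your proposal the same black-box citation the paper uses, which is entirely appropriate given that re-deriving Spielman's distance argument would be a substantial detour orthogonal to the paper's contribution.
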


With this primitive, we can construct our $4$-wise independence generator.

\subsection{Proofs of \Cref{thm:4-wiselinear} and \Cref{thm:approx}}

\begin{proof}[Proof of \Cref{thm:4-wiselinear}]
    Let $\enc:\{0,1\}^n\to \{0,1\}^{m}$ be an $O(n)$-sized circuit implied by \Cref{thm:spielman}.  We will show that there exist sets $S_1,\dots, S_{16n}\subset[m]$, each of size $\le \lceil 10/\delta\rceil$, and functions $g_1,\dots, g_{16n}$, with $g_i:\{0,1\}^{S_i}\to \{0,1\}$, such that \[G(s)_x \coloneqq \langle (g_1(\enc(x)_{S_1}),\dots,g_{16n}(\enc(x)_{S_{16n}})),s\rangle\] is a $4$-wise uniform generator. Once we have this, the desired result follows, as for a fixed $s$, $\enc(x)$ can be computed in linear circuit size, each $g_i$ can be computed in constant circuit size, and the parity of any subset of the $16n$ bits can be done in linear circuit size.
    
    By \Cref{lem:4wise}, it suffices to show the existence of $\{S_i\}, \{g_i\}$ such that for any $X\subset\F_2^n\setminus\{0\}$ of size $\le 4$, there exists $i\in[16n]$ such that $\sum_{x\in X} g_i(\enc(x)_{S_i}) \neq 0$. This will be done by the probabilistic method. In particular, we will pick each $S_i$ by selecting $\lceil 10/\delta \rceil$ elements uniformly and independently from $[m]$, and pick each $g_i:\{0,1\}^{S_i}\to \{0,1\}$ uniformly at random. Consider arbitrary $X\subset \{0,1\}^n\setminus \{0^n\}$ of cardinality at most $4$. 
    
    First assume $2\le |X|\le 4$. By construction of $\enc(\cdot)$, the strings $\{\enc(x)\}_{x\in X}$ will have pairwise distance $\ge \delta m$. Therefore, for a fixed $i\in[16n]$ and $x\neq x' \in X$, the probability that a random $S_i$ satisfies $\enc(x)_{S_i} = \enc(x')_{S_i}$ is at most $(1-\delta)^{|S_i|}$. Hence, the probability that $\enc(x)_{S_i} \neq \enc(x')_{S_i}$ for some $x\neq x'\in X$ is at least \[ 1-\binom{4}2(1-\delta)^{|S_i|} = 1-6e^{-10} \ge \frac{99}{100} \] by a union bound. Conditioned on this event, $\sum_{x\in X}g_i(\enc(x)_{S_i})$ is a uniform bit for a random $g_i$. Thus, for a fixed $i$, $\sum_{x\in X}g_i(\enc(x)_{S_i})\neq 0$ with probability at least $\frac{99}{100}\cdot \frac{1}2\ge\frac{1}3$. Since each coordinate is independent, the probability that $\sum_{x\in X}g_i(\enc(x)_{S_i}) = 0$ for all $i$ is at most $(2/3)^{16n}$.
    
    Now assume $X = \{x\}$. For any fixing of $\{S_i\}$ and for random $\{g_i\}$, $(g_i(\enc(x)_{S_i}))_{i\in [16n]}$ is a uniformly random string in $\{0,1\}^{16n}$, and is consequently $0$ with probability $ 2^{-16n} < (2/3)^{16n}$. 

Thus, by a union bound, all subsets $X$ of size at most 4 satisfy $\sum_{x\in X} g_i(\enc(x)_{S_i})\neq 0$ for some $i\in[16n]$ with probability at least \[1-\binom{2^n}{\le 4}(2/3)^{16n} \ge  1 - 2^{4n} (2/3)^{16n} > 0,\] implying the existence of such $\{S_i\}$ and $\{g_i\}$, and thereby yielding the result.
\end{proof}

With the help of \Cref{lem:approx-dist}, \Cref{thm:approx} is now immediate.

\begin{proof}[Proof of \Cref{thm:approx}]
    Simply use the $4$-wise uniform generator of \Cref{thm:4-wiselinear} in \Cref{lem:approx-dist}.
\end{proof}

\end{document}